\newcolumntype{C}[1]{>{\centering\let\newline\\\arraybackslash\hspace{0pt}}m{#1}}
\newcommand{\tvect}[3]{\ensuremath{\Bigl(\negthinspace\begin{smallmatrix}#1\\#2\\#3\end{smallmatrix}\Bigr)}}
\newcommand{\bigO}{\mathcal{O}}
\newcommand{\integers}{\mathds{Z}}
\newcommand*{\eg}{\textit{e.g.}\@\xspace}
\newcommand*{\ie}{\textit{i.e.}\@\xspace}
\newcommand*{\aka}{\textit{a.k.a.}\@\xspace}
\newcommand*{\etal}{\textit{et al.}\@\xspace}
\newcommand*{\resp}{\textit{resp.}\@\xspace}
\newcommand{\indic}{\mathds{1}}
\newcommand{\vect}[1]{\bm{#1}}
\newcommand{\vc}{\vect{c}}
\newcommand{\vx}{\vect{x}}
\newcommand{\vy}{\vect{y}}
\newcommand{\ve}{\vect{e}}
\newcommand{\vz}{\vect{z}}
\newcommand{\cc}{\operatorname{colour}}
\newcommand{\vv}{V}
\newcommand{\window}{\operatorname{Window}}
\newcommand{\multiset}{\operatorname{colourMultiset}}
\newcommand{\dual}{\operatorname{Dual}}
\newcommand{\profile}{\operatorname{Profile}}
\newcommand{\mA}{\mathcal{A}}
\newcommand{\ka}{\textcolor{red}{0}}
\newcommand{\kb}{\textcolor{green}{1}}
\newcommand{\kc}{\textcolor{blue}{2}}
\title{Robot~Positioning~Using~Torus~Packing~for~Multisets}
\author{Chung Shue {Chen}\footnote{Authors presented in alphabetical order.}}{Nokia Bell Labs, France}{chung_shue.chen@nokia-bell-labs.com}{}{}
\author{Peter {Keevash}}{Mathematical Institute, University of Oxford, UK}{peter.keevash@maths.ox.ac.uk}{}{Supported by ERC Advanced Grant 883810}
\author{Sean {Kennedy}}{Nokia Bell Labs, Canada}{sean.kennedy@nokia-bell-labs.com}{}{}
\author{\'Elie {de Panafieu}}{Nokia Bell Labs, France}{elie.de_panafieu@nokia-bell-labs.com}{}{}
\author{Adrian {Vetta}}{McGill University, Canada}{adrian.vetta@mcgill.ca}{}{}
\authorrunning{C.\,S Chen, P. Keevash, S. Kennedy, E. de Panafieu and A. Vetta} 
\keywords{Universal cycles, positioning systems, de Bruijn sequences} 
\begin{document}

\maketitle
%
\begin{abstract}
We consider the design of a positioning system where a robot determines its position from local observations. This is a well-studied problem of considerable practical importance and mathematical interest. The dominant paradigm derives from the classical theory of de Bruijn sequences, where the robot has access to a window within a larger code and can determine its position if these windows are distinct. We propose an alternative model in which the robot has more limited observational powers, which we argue is more realistic in terms of engineering: the robot does not have access to the full pattern of colours (or letters) in the window, but only to the intensity of each colour (or the number of occurrences of each letter). This leads to a mathematically interesting problem with a different flavour to that arising in the classical paradigm, requiring new construction techniques. The parameters of our construction are optimal up to a constant factor, and computing the position requires only a constant number of arithmetic operations.
\end{abstract}


\section{Introduction}

Consider a robot located on a grid of coloured squares that must determine
its position after observing part of the grid through a fixed viewing window.
The dominant paradigm for this problem derives from the mathematical theory
of de Bruijn sequences, i.e.\ binary (or bi-coloured) cyclic sequences of length $2^n$ 
in which each binary sequence of length $n$ 
appears exactly once as a subsequence of consecutive entries.
Such a sequence can be used for positioning a robot in one dimension:
the robot sees a viewing window of length $n$; this window induces a subsequence with a unique colour pattern;
from this colour pattern the robot can then reconstruct its position in the sequence (if we disregard issues of computational efficiency and error correction).
Generalisations of this idea to higher dimensions and related combinatorial
structures have led to a rich mathematical theory; see \cref{sec:rel}.

However, while this theory is mathematically pleasing, we will argue in \cref{sec:eng}
and Appendix~A
that engineering constraints support a model in which the robot
does not have access to the full colour pattern in the window,
but must infer its position only knowing the intensity of each colour,
that is, the multiset of colours.
More precisely, given an $n \times n$ grid,
a robot with an $m \times m$ viewing window, and a palette of $k$ colours, our task is to colour the grid so that each possible location of the viewing windows produces a different multiset of colours. Furthermore, it will be mathematically more natural to undertake this task for a torus of side $n$ rather than a grid, and also to generalise to an arbitrary dimension $d$.

\begin{example}
The grid colouring illustrated in Figure~\ref{fig:grid:coloring} has dimension $d=2$, size $n=8$, window size $m=4$, and $k=3$ colours (red $\ka$, green $\kb$, and blue $\kc$). No two $4 \times 4$ subsquares contain the same multiset of colours. For example, two viewing windows are shown with multisets that have multiplicities for $(\ka,\kb,\kc)$ equal to $(6,2,8)$ and $(3,5,8)$, respectively. Observe that the second window ``\emph{wraps around}'' because the grid is considered as a torus.
\end{example}

\begin{figure}[h!]
\begin{center}
$\begin{array}{|c|c|c|c|c|c|c|c|}
\hline
\kc & \kc & \kc & \kc & \kc & \kb & \kc & \kb
\\ \hline
\kc & \kc & \kc & \kc & \kb & \kc & \kb & \kc
\\ \hline
\kc & \kc & \kc & \kc & \kc & \kb & \kc & \kc
\\ \hline
\kc & \kc & \kc & \kc & \kb & \kc & \kb & \kc
\\ \hline
\ka & \kc & \ka & \kc & \ka & \kb & \ka & \kb
\\ \hline
\kc & \ka & \kc & \ka & \kb & \ka & \kb & \ka
\\ \hline
\ka & \kc & \ka & \kc & \ka & \kb & \ka & \kc
\\ \hline
\kc & \ka & \kc & \kc & \kb & \ka & \kb & \kc
\\ \hline
\end{array}
\quad
\begin{array}{|c|c|c|c|c|c|c|c|}
\hline
\kc & \kc & \kc & \kc & \kc & \kb & \kc & \kb
\\ \hline
\kc & \kc & \kc & \kc & \kb & \kc & \kb & \kc
\\ \hline
\kc & \kc & \kc & \kc & \kc & \kb & \kc & \kc
\\ \hline
\kc & \cellcolor{lightgray}\kc & \cellcolor{lightgray}\kc & \cellcolor{lightgray}\kc & \cellcolor{lightgray}\kb & \kc & \kb & \kc
\\ \hline
\ka & \cellcolor{lightgray}\kc & \cellcolor{lightgray}\ka & \cellcolor{lightgray}\kc & \cellcolor{lightgray}\ka & \kb & \ka & \kb
\\ \hline
\kc & \cellcolor{lightgray}\ka & \cellcolor{lightgray}\kc & \cellcolor{lightgray}\ka & \cellcolor{lightgray}\kb & \ka & \kb & \ka
\\ \hline
\ka & \cellcolor{lightgray}{\kc} & \cellcolor{lightgray}\ka & \cellcolor{lightgray}\kc & \cellcolor{lightgray}\ka & \kb & \ka & \kc
\\ \hline
\kc & \ka & \kc & \kc & \kb & \ka & \kb & \kc
\\ \hline
\end{array}
\quad
\begin{array}{|c|c|c|c|c|c|c|c|}
\hline
\cellcolor{lightgray}\kc & \kc & \kc & \kc & \kc & \cellcolor{lightgray}\kb & \cellcolor{lightgray}\kc & \cellcolor{lightgray}\kb
\\ \hline
\cellcolor{lightgray}\kc & \kc & \kc & \kc & \kb & \cellcolor{lightgray}\kc & \cellcolor{lightgray}\kb & \cellcolor{lightgray}\kc
\\ \hline
\kc & \kc & \kc & \kc & \kc & \kb & \kc & \kc
\\ \hline
\kc & \kc & \kc & \kc & \kb & \kc & \kb & \kc
\\ \hline
\ka & \kc & \ka & \kc & \ka & \kb & \ka & \kb
\\ \hline
\kc & \ka & \kc & \ka & \kb & \ka & \kb & \ka
\\ \hline
\cellcolor{lightgray}\ka & \kc & \ka & \kc & \ka & \cellcolor{lightgray}\kb & \cellcolor{lightgray}\ka & \cellcolor{lightgray}\kc
\\ \hline
\cellcolor{lightgray}\kc & \ka & \kc & \kc & \kb & \cellcolor{lightgray}\ka & \cellcolor{lightgray}\kb & \cellcolor{lightgray}\kc
\\ \hline
\end{array}
$
\caption{}
\label{fig:grid:coloring}
\end{center}
\end{figure}

Our question of study is, given $m,d,k$, what is the largest grid size $n$ for which position reconstruction is possible?
Unfortunately, even the one-dimensional version of this question is the subject of several unsolved problems,
discussed in \cref{sec:rel}.
However, for practical purposes one can be content to relax from finding
the optimal $n$ given $m,d,k$ to a value that is optimal up to a constant factor,
where we think of $d$ and $k$ as fixed and consider the asymptotics for large $m$ and~$n$.
There is a clear information theoretic barrier (see \Cref{th:bound:size}) at $n = \Theta_{k,d}(m^{k-1})$, where the subscripts indicate that $k$ and $d$ are constants.
The main contribution of this paper is a construction that achieves this theoretical optimum up to a constant factor, and moreover has optimal computational efficiency, in that only a constant number of arithmetic operations are required to compute the location of the window from its multiset of colours.
We implemented and tested this construction in Python~\cite{ourcode}.

    \subsection{Definitions and Results}

\noindent \textbf{Notations.}
All vectors, tuples and sequences are indexed starting at $0$.
Integer intervals are denoted with square brackets, such as $[0, n-1]$.
Given a length $d$, the vector $\ve_i$
has all coordinates equal to~$0$, except the $i$th,
which is equal to~$1$.
The value of $\indic_{\text{condition}}$ is $1$
if \emph{condition} is satisfied, and $0$ otherwise.
We write $i \equiv j \bmod n$
when the integer $i - j$ is divisible by $n$.
Tables are represented with row indices
increasing from top to bottom and column indices
increasing from left to right, both starting at $0$.

\medskip

We first present general definitions
that will be useful for discussing the literature
and presenting our construction,
starting with \emph{cycle packings}
for the one-dimensional case,
and continuing with their higher dimensional generalisation,
\emph{torus packings}.
Then the central objects of this article,
\emph{grid colouring}, are formally defined
as particular cases of torus packings.
Our main result, \cref{th:construction:grid:colouring},
is a near-optimal construction for grid colourings.
It will make use of \emph{vector sum packings}
(\cref{def:vector sum packing}),
which are particular cases of cycle packings.

\begin{definition} \label{def:cycle:packing}
Given an alphabet $\mA$,
a size $n$,
a window size $m$,
and a function $f$ on $\mA^m$,
an $(\mA, f, n, m)$-\textbf{cycle packing}
is a function $W : \integers \mapsto \mA$
that satisfies
\begin{itemize}
\item {\tt Periodicity.}
For all $x \in \integers$ 
we have $W_x = W_{x + n}$.
\item {\tt Injectivity.}
If
$f(W_x, W_{x+1}, \ldots, W_{x+m-1}) = f(W_y, W_{y+1}, \ldots, W_{y+m-1})$, for any integers $x$ and $y$,
then $x \equiv y \bmod n$.
\end{itemize}
\end{definition}

Let us explain this formal definition.
Consider a circle composed of $n$ squares,
each receiving a letter from $\mA$.
A robot located on this circle wants to recover its position.
It makes a local measurement,
function $f$ of the letters in a window of size $m$ around it.
The injectivity condition ensures
that two positions of the robot (and thus of the window)
correspond to two distinct values of $f$.
Thus the robot is always able to deduce its position.
Observe that the form of function $f$ matters as different functions induce different types of information that robot can extract from the viewing window.
To illustrate this point, consider the following three examples of
$(\mA, f, n, m)$-cycle packings. These examples all have
$\mA = \{\ka,\kb, \kc\}$ and $m=3$ but differ in their functions $f$
(which in turn lead to varying sizes of $n$).

\begin{example}
Take the basic case where $f$ is the identity function.
The robot thus receives the entire colour pattern in its viewing window. Thus a cycle packing corresponds to a sequence of length $n$ where all the contiguous subsequences of length $m$ have distinct colour patterns.
With $\mA = \{\ka,\kb, \kc\}$
and window size $m=3$, we have that
$(\ka,\kb,\kb,\kb,\kc,\kb,\ka,\kb,\kc,\ka,\kb,\ka,\kc,\kb,\kb,
\ka,\ka,\kc,\kc,\kb,\kc,\kc,\kc,\ka,\kc,\ka,\ka)$
is a cycle-packing of size $n=27$ as
each possible string of length $3$ appears at most once; thus, 
the injectivity property holds.
In fact, each such string appears {\em exactly} once (\eg the factor $(\ka, \ka, \ka)=f(\ka, \ka, \ka)$ appears by wrapping around),
making it a de Bruijn sequence. This exactness property is not required for cycle packings, but when it is satisfied
the cycle packing is called a \emph{universal cycle}.
\end{example}

\begin{example}
Recall our motivation is that the robot extracts the colour intensities rather than the entire colour pattern.
So instead of the identity function, assume that $f$ is the multiset counting function which simply counts the number of appearances of each colour in the viewing window. 
Now, for the alphabet $\mA = \{\ka,\kb, \kc\}$
with window size $m=3$, the sequence $(\ka,\kb,\kb,\kb,\kc,\kc,\kc,\ka,\ka)$ is a cycle packing of size $n=9$ as every multiset appears at most once (\eg the multiset $(3, 0, 0)=f(\ka,\ka,\ka)$ arises by wrapping around) and the injectivity property holds. However, this cycle packing is not a universal cycle as no viewing window contains all three colours, so the multiset $(1,1,1)$ does not appear. 
\end{example}

\begin{example}
\label{ex:vector:sum:dim:one}
Again take $\mA = \{\ka,\kb, \kc\}$
with window size $m=3$, but now let $f$ be the summation function.
We claim the sequence $(\ka,\ka,\ka,\kc,\kc,\kc,\kb)$ is a cycle packing of size $n=7$. To see this, observe that $f(\ka, \ka, \ka)=0+0+0=0$,
$f(\ka, \ka, \kc)=0+0+2=2$, etc. Continuing these calculations,
the outputted sequence of sums is $(0, 2, 4, 6, 5, 3, 1)$. This has distinct entries, so injectivity is satisfied.
The reader may query the relevance of the summation function.
In fact, this example is a special case of vector sum packings (\cref{def:vector sum packing})
which, in turn, will play a critical role in the construction  
underlying our main theorem. 
\end{example}

Of course, our interest lies in dimension $d>1$,
so we now extend the definition
of cycle packings to higher dimensions.

\begin{definition} \label{def:torus}
Given an alphabet $\mA$,
a size $n$,
a window size $m$,
a dimension $d$, 
and a function $f$ on $\mA^{m^d}$,
an $(\mA, f, n, m, d)$-\textbf{torus packing}
is a function $W : \integers^d \mapsto \mA$
that satisfies
\begin{itemize}
\item {\tt Periodicity.}
For all $\vx \in \integers^d$ and $j \in [0, d-1]$,
we have $W_{\vx} = W_{\vx + n \ve_j}$
where $\ve_j$ denotes the vector
with a $1$ in position $j$ and $0$ elsewhere.
\item {\tt Injectivity.}
For any $\vx$ and $\vy$ in $\integers^d$, if
$f \left( (W_{\vx + \vc})_{\vc \in [0, m-1]^d} \right) =
f \left( (W_{\vy + \vc})_{\vc \in [0, m-1]^d} \right)$,
then $x_j \equiv y_j \bmod n$ for all $j$
(where $x_j$ denote the $j$th coordinate of the vector $\vx$).
\end{itemize}
\end{definition}

Note the key distinction in this definition is that the viewing window shares the same higher dimension as the torus.
Again, by the periodicity property, we can identify a torus packing $W$ with its pattern $(W_{\vx})_{\vx \in [0,n-1]^d}$.
We are now ready to define the central objects of this paper. 
These are grid colourings, a particular case of torus packing
that correspond exactly to the robot position reconstruction problem.

\begin{definition} \label{def:grid:colouring}
An $(n,m,d,k)$-\textbf{grid colouring} is an $(\mA, f, n, m, d)$-torus packing
with $|\mA| = k$
and $f$ mapping any sequence to the multiset of its entries.
\end{definition}


\begin{example}
Consider, again, our example in~\cref{fig:grid:coloring}. 
This is an $(\mA, f, n, m, d)$-torus packing
with $\mA = \{\ka,\kb,\kc\}$, $n=8$, $m=4$, $d=2$ 
and where $f$ is the multiset counting function.
Consequently, since $|\mA| = 3$, this is 
an $(8,4,2,3)$-grid colouring.
\end{example}

The following is our main result on grid colourings,
implemented and tested in Python \cite{ourcode}.
(see also Appendix~B).

\begin{theorem}
\label{th:construction:grid:colouring}
Fix a dimension $d \geq 2$ and a number of colours $k$ of the form $b d + 1$
for some $b \geq 1$.
For any window size $m$ multiple of $2 (k-1)$,
there is an $(n, m, d, k)$-grid colouring $W$
(explicitly constructed in the proof)
with
\[
    n \sim C_k^{1/d}\cdot m^{k-1}
    \qquad
   where
    \qquad
    C_k = \left( \frac{2}{k-1} \right)^{k-1}.
\]
Furthermore, for any $\vx$ in $\integers^d$,
given the multiset of colours in
$(W_{\vx + \vc})_{\vc \in [0, m-1]^d}$
one can compute $\vx$ mod $n$
with $O_{k,d}(1)$ arithmetic operations.
\end{theorem}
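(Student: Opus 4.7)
My plan is to reduce the $d$-dimensional grid colouring to $d$ one-dimensional vector sum packings (\cref{def:vector sum packing}), one per dimension. I single out one ``background'' colour and partition the remaining $k-1 = bd$ colours into $d$ groups of $b$ feature colours, one group per dimension. For each dimension $j \in [0, d-1]$, I would design a periodic function using the background together with dimension-$j$'s feature colours, based on a 1D vector sum packing whose $b$-tuple of feature-letter counts in an appropriate window injects into $\integers/n\integers$. Then I construct $W$ by superimposing these $d$ patterns on the grid so that each cell displays at most one dimension's feature colour (with background everywhere else), and so that every $m^d$-window ``sees'' each dimension's pattern through a structured perpendicular projection.

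Concretely, I would impose a periodic structure in the perpendicular directions---of period dividing $m$, which motivates the divisibility hypothesis $2(k-1)\mid m$---that guarantees each $m^d$-window intersects the region carrying dimension-$j$ feature colours in the same way up to a shift in $x_j$. As a consequence, each feature-colour count in the window multiset equals, up to a known constant multiplier, a structured sum involving the packing for dimension $j$ evaluated over the $d$-dimensional window. Aggregating over the $\sim m^{d-1}$ perpendicular positions boosts the range of each count from the trivial $\bigO(m)$ available in a single 1D slice to $\bigO(m^d)$, thereby enabling the $b$ feature counts for dimension $j$ to distinguish $n \sim C_k^{1/d} m^{k-1}$ distinct values. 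Decoding then reverses this: read the $k-1$ feature counts from the window multiset, split them into $d$ groups of $b$, divide by the known multipliers, and invert each 1D vector sum packing---each step costing $\bigO_{k,d}(1)$ arithmetic operations.

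The main obstacle will be designing the superposition so that three properties coexist: the per-dimension feature counts decouple from each other in the window multiset, the perpendicular count of cells in each dimension's region is constant across all window shifts (this is exactly what the $2(k-1)\mid m$ hypothesis enforces), and the $b$-tuple of counts for dimension $j$ attains the required $\sim m^{bd}$ distinct values rather than only the $\bigO(m^b)$ one would get from a naive single-slice 1D packing. Meeting the last constraint is the critical subtlety: it forces the feature-colour pattern along dimension $j$ to depend on the perpendicular coordinates in a carefully tiered (multi-scale, essentially base-$m$-digit-like) fashion, so that the window effectively aggregates several levels of information about $x_j$ at once. Getting this right while simultaneously tuning the construction to achieve the exact constant $C_k = (2/(k-1))^{k-1}$ is where the real work lies, and will require a delicate accounting of how the $d$ superimposed 1D packings share the $m^d$ cells of the window.
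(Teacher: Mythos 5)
Your overall architecture coincides with the paper's reduction (\cref{th:vector sum packing:to:grid:colouring}): one blank/background colour, the remaining $k-1=bd$ colours split into $d$ pigment classes of $b$ shades each, a quasi-periodic structure of period $m$ in the directions perpendicular to dimension $j$ so that every window meets dimension $j$'s pattern in a way depending only on $x_j$, shade counts whose range is boosted to $\Theta(m^d)$ by aggregating over the $m^{d-1}$ perpendicular positions, and dimension-by-dimension decoding. One point where your intuition is slightly misplaced: the ``tiered, multi-scale, base-$m$-digit-like'' structure you anticipate does not (and need not) live in how the colouring depends on the perpendicular coordinates. In the paper that dependence is plain periodicity, and within each slab $\{\vx \mid x_i = j,\ x_\ell \in [0,m-1] \text{ for } \ell \neq i\}$ the erasure pattern is arbitrary so long as the count of each shade equals a prescribed value $\vz_j \in [0,s]^b$. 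All of the multi-scale structure is pushed into the one-dimensional cyclic sequence $(\vz_j)_j$ itself.

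That relocation exposes the genuine gap: your proposal reduces the theorem to the existence of, for each $b \geq 1$, an $(n,m,b,s)$-vector sum packing --- a cyclic sequence of length $n = \Theta\bigl((ms)^b\bigr)$ of vectors in $[0,s]^b$ whose length-$m$ window sums are pairwise distinct --- together with a constant-arithmetic-operation inversion of the window-sum map, and then stops. This object is exactly \cref{th:construction:vector sum packing} and \cref{th:decoding:vector:sum:packing}, which the paper proves via an explicit recursive construction (the $\profile(s,m,T)$ sequences, their $m$-duals, and the matrices $M^{(b)}$ built by concatenating $2ms+1$ copies of $M^{(b-1)}$, some truncated, indexed by the level sets of the dual of a profile). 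This is where essentially all the technical work of the paper resides: note that a naive product of $b$ independent scalar packings fails, since the window sums must be \emph{jointly} injective over a range of length $\Theta((ms)^b)$ while each coordinate sum only takes $O(ms)$ values, which forces precisely the nested truncated-copy structure. Your proposal names this difficulty (``where the real work lies'') but supplies neither the sequence, nor the injectivity argument, nor the decoder, so as written it is a correct reduction to an unproved lemma rather than a proof.
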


Our construction in \cref{th:construction:grid:colouring} of size $n=\Omega_{k,d}(m^{k-1})$
is optimal up to a multiplicative constant.
This fact follows from the following observation,
which is immediate from counting considerations
(injectivity requires the number of possible colour multisets
to be at least the number of windows that they must distinguish).

\begin{observation}
\label{th:bound:size}
The parameters of any $(n, m, d, k)$-grid colouring
satisfy the inequality
\[
    n^d \leq \binom{m^d + k - 1}{k - 1}.
\]
In particular, for fixed dimension $d$
and number of colours $k$,
as the window size $m$ tends to infinity, we have
\[
    n \leq C_k'^{1/d} \cdot m^{k-1} (1 + \bigO(m^{-1}))
   \qquad
   with
    \qquad
    C_k' = \frac{1}{(k-1)!}.
\]
\end{observation}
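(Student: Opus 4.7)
The plan is to prove both parts by a direct counting argument, as the statement is labelled an ``observation'' precisely because injectivity forces the obvious pigeonhole bound.

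First I would note that, by the periodicity property of a torus packing, the window position $\vx$ is only meaningful modulo $n$ in each coordinate, so there are exactly $n^d$ distinct windows to distinguish, namely those indexed by $\vx \in [0, n-1]^d$. Each such window contains $m^d$ cells, each coloured with one of $k$ colours, so the multiset of colours appearing in the window is a multiset of size $m^d$ drawn from an alphabet of size $k$. The number of such multisets is counted by a stars-and-bars argument, giving $\binom{m^d + k - 1}{k - 1}$. The injectivity clause of \cref{def:torus}, specialised via \cref{def:grid:colouring} to $f$ being the multiset-of-entries map, says that distinct windows (mod $n$) produce distinct multisets. Hence the map from windows to multisets is injective, yielding
\[
    n^d \;\leq\; \binom{m^d + k - 1}{k - 1},
\]
which is the first inequality.

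For the asymptotic statement with $d$ and $k$ fixed and $m \to \infty$, I would expand the binomial coefficient as a polynomial of degree $k-1$ in $m^d$:
\[
    \binom{m^d + k - 1}{k - 1}
    = \frac{1}{(k-1)!}\prod_{i=0}^{k-2}(m^d + k - 1 - i)
    = \frac{m^{d(k-1)}}{(k-1)!}\bigl(1 + O(m^{-d})\bigr).
\]
Taking $d$-th roots and using $(1+x)^{1/d} = 1 + O(x)$ for small $x$, together with the observation that $d \geq 1$ so $O(m^{-d}) \subseteq O(m^{-1})$, gives
\[
    n \;\leq\; \left(\frac{1}{(k-1)!}\right)^{1/d} m^{k-1}\bigl(1 + O(m^{-1})\bigr),
\]
which is the claimed bound with $C_k' = 1/(k-1)!$.

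There is no real obstacle here; the only subtlety worth flagging is being careful that the periodicity reduces the domain from $\integers^d$ to a finite set of size exactly $n^d$ before invoking injectivity, so that one is truly comparing $n^d$ with the number of multisets rather than some larger or smaller count. The constants $O_{k,d}(\cdot)$ in the expansion depend only on $k$ and $d$, consistent with the regime in which the asymptotic is stated.
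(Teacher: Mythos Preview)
Your argument is correct and matches the paper's own justification, which is just the one-line parenthetical remark that injectivity forces the number of windows $n^d$ to be at most the number of possible colour multisets $\binom{m^d+k-1}{k-1}$. Your expansion of the asymptotic is a faithful unpacking of what the paper leaves implicit.
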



We conclude this section by discussing related work and
providing technological justifications for our positioning model.

\subsection{Related Work} \label{sec:rel}

The combinatorial structures associated with torus packings (\Cref{def:torus}) have a rich mathematical literature, 
starting from a problem solved in the 19th century, independently rediscovered
by de Bruijn and now known as {\em de Bruijn sequences} (see \cite{de1975acknowledgement}).
The generalisation to higher dimensions was independently considered in several papers
starting from the 1960's (see \cite{macwilliams1976pseudo}) and has developed an extensive
literature (see \eg \cite{hurlbert_bruijn_1993, hurlbert_new_1995, hurlbert_existence_1996})
under the name of {\em de Bruijn tori}. The extension to general combinatorial structures
encoded by sequences as in  \Cref{def:torus}  was proposed  by Chung, Graham and Diaconis \cite{chung_universal_1992},
who considered the one-dimensional problem (``{\em universal cycles}'') for a variety of combinatorial structures.
These early formulations of the problem generally asked for optimal solutions
in which every object in a given combinatorial class is realised exactly once;
in the context of \Cref{def:torus} this corresponds to strengthening injectivity to bijectivity.
However, for practical purposes one can be satisfied with approximately optimal solutions,
and given the difficulty of finding optimal solutions there is also a substantial literature 
(see \eg \cite{curtis_near-universal_2009, blanca_universal_2011, debski_universal_2016})
finding approximate solutions from the perspective of packing (injectivity) and covering (surjectivity).

For the multiset encoding problem considered in this paper, finding the exact optimum is an open problem
even for the one-dimensional setting of universal cycles, considered by Knuth~\cite[Fascicle 3, Section 7.2.1.3, Problem 109]{knuth2013art}.
Hurlbert \etal \cite{hurlbert_universal_2009}
construct universal cycles for multisets with particular parameters,
and Blanca and Godbole~\cite{blanca_universal_2011} consider the problem
when the multisets have bounded multiplicities.
Furthermore, the known results only consider the opposite regime of parameters
from those needed for our application: we consider small palettes of colours (\aka alphabets) and a large window,
whereas previous techniques in the literature only apply to small windows and large alphabets.
This comment also applies to the problem replacing ``multiset'' by ``set'', which is perhaps even more natural mathematically,
given that it can be viewed as a hypergraph version of the Euler tour problem (introduced by Euler in the 18th century).
Following many partial results, an exact solution to this problem (for small windows and large alphabets) was found by
Glock \etal \cite{glock_euler_2020}, solving a conjecture from
Chung \etal~\cite{chung_universal_1992}.
We are unaware of any results in the literature on multiset packing in one dimension
(meaning maximizing the number of multisets that appear in a sequence,
rather than looking for sequences that contain all possible multisets).

Moving on from the abstract mathematical problem, we now consider some 
of the computer science literature
aimed towards the specific application of indoor positioning. 
Much of the early work was surveyed by Burns and Mitchell \cite{burns_coding_1992}.
More recent literature 
(see \eg \cite{bruckstein_simple_2012, berkowitz_robust_2016, makarov_construction_2019,  chee_binary_2019})
has emphasised two further conditions that do not appear
in the mathematical formulation but are naturally desirable for practical implementations: 
namely (a)~computational efficiency, and (b)~robustness against measurement errors.
These works bring a variety of techniques from coding theory to bear on the positioning problem,
providing efficient positioning algorithms with error correction.
However, while these algorithms make natural use of an existing toolkit and are conceptually pleasing,
we will argue that they are not addressing the most appropriate formulation of the problem from the point of view 
of the target application of robot positioning, for which a new paradigm is needed.

\subsection{Motivation and Engineering Aspects} \label{sec:eng}

The focus of this paper is theoretical, concerning near-optimal designs for elegant combinatorial structures, namely torus packings. However, as stated, our motivation derives from the problem of position reconstruction (where the dimension is $2$
and the torus of is considered a square).
The vast number of applications of localization and positioning
have prompted the design of a plethora of systems; see \cite{SurveyIPS19} and references therein.

Given the great practical importance of positioning systems, it behoves us to justify our claim that the approporiate way to model 
them is with multiset counting functions.
We do this in this section by summarising two systems where our code (and variants of it) will be useful (more details in Appendix~A).

\medskip

\noindent \textbf{Light-Based Positioning.}
Recently, visible light based positioning systems
have gained much attention
\cite{VLP17,SurveyIPS19,VLP20} 
for two key reasons: (i)~there is a strong demand for accurate but low-cost solutions,
and (ii)~there have been breakthroughs
in the energy consumption and life expectancy
of light-emitting diodes (LEDs).
Contrary to systems requiring several light emitters
and relying on triangulation
\cite{VLP20},
systems based on universal torus packing
(such as de Bruijn torus
\cite{sinden_sliding_1985,aboufadel_position_2007,schusselbauer2021dothraki}
and our construction)
rely on only one light source,
reducing cost and energy consumption.
Consider a room lit by a light-emitting diode (LED).
A robot moving on the floor wears a light sensor or camera.
A printed film is placed on top of the robot, above this sensor.
The film is printed with a coloured grid,
distorted so that depending
on the position of the robot in the room,
the light coming from the LED
projects a square window
of the coloured grid on the sensor \cite{OGC23}.
If the coloured grid code depends on the respective positions of the colours 
(such as de Bruijn torus),
the light detector must be a camera
\cite{yang_wearables_2015,li2018two}
to recognize the pattern.
If a torus packing for multisets is used instead,
the robot can wear a simple light intensity sensor
to recover the multiset of colours:
the pattern of the colours is not needed,
and we side-step the problematic issue of resolving the image.
Such a device is considerably less expensive
and has much shorter response time (lower latency) than a camera.
It should be noted that the total light intensity
(number of coloured rays) received by the detector
depends on the position and also the orientation of the robot.
Thus, for practical application of the construction presented in this article some redundancy should be added to allow for a unique position decoding in all cases.
Positioning systems based on a de Bruijn torus
also require redundancy to correct errors
\cite{chee_binary_2019}
and allow different orientations of the receiver
\cite{szentandrasi2012uniform}.
We plan to explore redundancy
in universal torus packing for multisets in future work.

\medskip

\noindent \textbf{Ambient Backscatters.}
An ambient backscatter is a small and inexpensive device
that, upon reception of a radio wave, turns it into electricity
and sends back data through a radio signal.
Consider a warehouse where ambient backscatter devices
are regularly placed, forming a grid.
A robot with a radio emitter and receptor needs
to locate itself in the warehouse.
Each backscatter device, when in reach of the 
emitted radio wave,
sends back its identification number (ID).
If all devices have distinct IDs,
the robot can determine its position based
on the signals it receives. 
However, in the interest of energy consumption,
IDs composed of few bits are preferred \cite{Nokia23}.
The same ID can be reused by several devices,
provided that at any position,
the multiset of IDs detected by the robot is unique.
The relative 
position of the 
robot to its neighbouring backscatter devices is unknown, so the IDs should be chosen
following a universal torus packing for multisets
(a de Bruijn torus would require the robot
to have directional antennas,
making the system less practical and more expensive).
For this application, our construction should be adapted to account 
for radial symmetry and decay of signal power in the detection window.

\section{Overview}

We have reduced the position reconstruction problem to
the design of grid colourings.
Consequently our main result, \cref{th:construction:grid:colouring},
is based upon an near-optimal grid colouring construction.
A key building block in this construction will be the following family of cycle packings, which we call vector sum packings.

\begin{definition}
\label{def:vector sum packing}
Given positive integers $n$, $m$, $b$ and $s$, an $(n, m, b, s)$-\textbf{vector sum packing}
is an $(\mA, f, n, m)$-cycle packing with $\mA = [0, s]^b$
and $f(\vz_0, \ldots, \vz_{m-1}) = \vz_0 + \cdots + \vz_{m-1}$.
\end{definition}

\begin{example}
For the special case of {\em vector dimension} $b=1$ we have already encountered
a $(8, 3, 1, 2)$-vector sum packing in \cref{ex:vector:sum:dim:one}.
\end{example}

%

\begin{example}
Let us see an example of a vector sum packing with vector dimension
$b=3$. Set $m=2$, $s=1$, $n=8$ and
$(\vz_0, \vz_1,\dots, \vz_7) = 
\left( \tvect{0}{0}{0} , \tvect{1}{0}{0},\tvect{0}{1}{0},\tvect{0}{0}{0},\tvect{0}{0}{1},\tvect{1}{0}{0},\tvect{0}{1}{1},\tvect{0}{0}{0} \right)$. Then
$f(\vz_0, \vz_1) = \tvect{1}{0}{0}$, 
$f(\vz_1, \vz_2) = \tvect{1}{1}{0}$, 
$f(\vz_2, \vz_3) = \tvect{0}{1}{0}$, 
$f(\vz_3, \vz_4) = \tvect{0}{0}{1}$, 
$f(\vz_4, \vz_5) = \tvect{1}{0}{1}$, 
$f(\vz_5, \vz_6) = \tvect{1}{1}{1}$, 
$f(\vz_6, \vz_7) = \tvect{0}{1}{1}$,
and
$f(\vz_7, \vz_0) = \tvect{0}{0}{0}$.
As these sums all differ, the injectivity property holds and we have a vector sum packing.
\end{example}

The strategy of our construction is encapsulated in the following two key lemmas (proven in \cref{sec:proofs:vector:to:grid} and \cref{sec:proof:vector:colouring}, respectively).
The first reduces the construction of a grid colouring to the construction of a vector sum packing.

\begin{lemma}
\label{th:vector sum packing:to:grid:colouring}
Consider a dimension $d$,
size $n$,
window size $m$,
and number of colours $k$,
and assume the existence of integers $b$ and $s$
satisfying $k = b d + 1$ and $s = \frac{m^{d-1}}{b d}$.
Then the existence of an $(n, m, b, s)$-vector sum packing
implies the existence of a $(n, m, d, k)$-grid colouring.
\end{lemma}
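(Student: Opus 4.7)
The plan is to build an explicit $(n, m, d, k)$-grid colouring from the given $(n, m, b, s)$-vector sum packing $W$ using a palette of $k = bd + 1$ colours: a background colour $\bot$, together with labelled colours $(j, \ell)$ indexed by axis $j \in [0, d-1]$ and coordinate $\ell \in [0, b-1]$. The design goal is that in any window $\vx + [0, m-1]^d$ the number of cells coloured $(j, \ell)$ equals the $\ell$-th component of the vector sum $\sum_{c=0}^{m-1} W_{x_j + c}$. Granting this, the colour multiset encodes the axis-$j$ vector sum for each $j$, and the injectivity of $W$ then recovers $x_j \bmod n$ for every $j$.

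To realise this design, I would partition the residue cube $[0, m-1]^d$ by the axis label $a(\vr) = \bigl(\sum_i r_i\bigr) \bmod d$. Since $2(k-1) = 2bd$ divides $m$, in particular $d$ divides $m$, so for every axis $j$ and every fixed value of $r_j$ exactly $m^{d-1}/d = bs$ residues have $a(\vr) = j$. Enumerate these $bs$ residues in a canonical order (e.g.\ lexicographic in the remaining coordinates), split the list into $b$ consecutive blocks of size $s$, and label block $\ell$ with the colour class $(j, \ell)$. Each residue $\vr$ is thereby assigned a triple $(j, \ell, p)$ with $p \in [0, s-1]$ its rank inside its block. For every cell $\vx$, set $\vr = \vx \bmod m$ and colour $\vx$ with $(j, \ell)$ if $p < W^{(\ell)}_{x_j}$, and with $\bot$ otherwise. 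The colouring is $n$-periodic because $W$ is, provided the main construction of the vector sum packing arranges $m \mid n$, so that $\vx \bmod m$ is preserved under translations by $n \ve_j$.

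Injectivity is then a direct count. In any window the residues $(\vx + \vc) \bmod m$ traverse $[0, m-1]^d$ exactly once as $\vc$ varies. Fix an axis $j$ and an offset $c \in [0, m-1]$: the cells of the window with axis-$j$ coordinate $x_j + c$ form a $(d-1)$-dimensional slice containing exactly $bs$ axis-$j$ residues, partitioned into $b$ blocks of $s$, and by construction exactly $W^{(\ell)}_{x_j + c}$ cells of block $\ell$ receive colour $(j, \ell)$. Summing over $c$ yields $\sum_{c=0}^{m-1} W^{(\ell)}_{x_j + c}$ cells of colour $(j, \ell)$ in the window, precisely the $\ell$-th coordinate of the axis-$j$ vector sum. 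Inverting $W$ along each axis recovers $x_j \bmod n$, so the whole position is decoded in $\bigO_{k, d}(1)$ arithmetic operations. The main obstacle is obtaining an equitable residue partition in which every axis-$j$ slice contains the same number $bs$ of axis-$j$ cells; the sum-mod-$d$ rule achieves this cleanly thanks to $d$ dividing $m$, and after that everything reduces to elementary counting.
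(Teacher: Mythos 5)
Your construction is correct and is essentially the paper's own: you use the same palette of $d$ pigment classes of $b$ shades each plus one blank colour, assign a cell's axis by $\bigl(\sum_i x_i\bigr) \bmod d$, and arrange that the count of colour $(j,\ell)$ in any window equals the $\ell$-th coordinate of the axis-$j$ window sum of the vector sum packing, so injectivity transfers directly. The differences are cosmetic --- you verify the window counts by direct enumeration of residues where the paper routes through a quasi-periodicity/separation argument, and you explicitly flag that $n$-periodicity needs $m \mid n$ (and equidistribution needs $d \mid m$), conditions the paper leaves implicit but which hold in its application.
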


Given \cref{th:vector sum packing:to:grid:colouring}, the final piece in the puzzle is 
a construction of vector sum packings.

\begin{lemma}
\label{th:construction:vector sum packing}
For any $s \geq 1$, $m \geq 2$ and $b \geq 1$,
there exists an $(n_b, m, b, 2s)$-vector sum packing
with
\[
    n_b =
    (2 m s + 1)^{b - 1}
    \left( 2 m s - \frac{1}{s} \right)
    + \frac{1}{s}.
\]
\end{lemma}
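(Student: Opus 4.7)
The plan is to proceed by induction on the vector dimension $b$, using the recurrence $n_b = N \cdot n_{b-1} - 2m$ with $N := 2ms + 1$ and $n_1 = 2ms$, which is easily read off from the closed form of $n_b$.

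For the base case $b = 1$, the task is to produce a sequence $W \in [0, 2s]^{2ms}$ whose cyclic $m$-window sums are pairwise distinct. These sums lie in $[0, 2sm]$, a set of exactly $N$ values, so the construction must hit $N - 1$ of them. A convenient approach is to prescribe the sum sequence $(S_x)$ first and then recover $W$ via the telescoping identity $W_{x+m} - W_x = S_{x+1} - S_x$. Since $m$ divides $n_1$, this identity decomposes $W$ into $m$ independent cyclic subsequences of length $2s$ indexed by residues modulo $m$, and within each class the cyclic sum of differences must vanish. Choosing $(S_x)$ to zigzag through $N-1$ distinct values in $[0, 2sm]$ via $s$ ascending and $s$ descending phases produces differences in $\{-2s, \dots, 2s\}$ that integrate to a valid $W$ taking values in $[0, 2s]$.

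For the inductive step, given an $(n_{b-1}, m, b-1, 2s)$-vector sum packing $U$, I would construct $V$ as a cyclic arrangement of $N$ overlapping ``blocks'', each essentially a copy of $U$, distinguished by a marker in the newly added $b$-th coordinate. Concretely, set $V_i = (\widetilde{U}_i, g_i) \in [0, 2s]^b$, where $\widetilde{U}$ tiles $U$ across the $N$ blocks subject to a single seam that realises the $-2m$ correction, and $g : \integers \mapsto [0, 2s]$ is a scalar sequence of period $n_b$ whose window sums $G(x) = g_x + \cdots + g_{x+m-1}$ take $N$ distinct values in $[0, 2sm]$, one for each block index. Injectivity then decomposes cleanly: if two window-sum vectors of $V$ agree, equality in the first $b-1$ coordinates forces $x \equiv y \bmod n_{b-1}$ by induction, and the marker $G(x) = G(y)$ then identifies the block, forcing $x \equiv y \bmod n_b$.

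I expect the main obstacle to be the coordination between $\widetilde{U}$ and $g$ across the seam: the $-2m$ offset effectively overlaps the last and first blocks by two windows' worth of entries, so both $\widetilde{U}$ and $g$ must be designed to respect the full period $n_b$ (rather than $N \cdot n_{b-1}$) while preserving the desired window-sum properties everywhere, including at the seam itself. Notably, the marker $g$ is itself essentially a $b=1$ problem (a sequence in $[0,2s]$ with controlled window sums), so the base-case construction can plausibly be reused to supply $g$. A secondary delicate point is checking that the zigzag in the base case is compatible with the per-residue cyclic constraints imposed by the telescoping identity, which should pin it down up to a translation.
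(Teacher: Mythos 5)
Your overall architecture matches the paper's: induction on $b$, a base case obtained by prescribing the window-sum sequence as a single up--down zigzag over $[0,2ms]$ and recovering the letters by telescoping along residue classes mod $m$, and an inductive step that concatenates $N=2ms+1$ near-copies of the $(b-1)$-dimensional packing together with a marker coordinate whose window sums are constant on each copy and pairwise distinct across copies. The base case and the decomposition of injectivity are essentially the paper's $\profile(s,m,0)$ and its $m$-dual.

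There is, however, a genuine gap at exactly the point you flag as the main obstacle: how the $-2m$ deficit ($n_b = N\,n_{b-1}-2m$) is absorbed. Your proposed fix --- a single seam where the last and first blocks overlap by $2m$ entries --- does not work. First, the marker's window sum must be constant on each block with a different value per block, so it cannot take two distinct values on positions shared by two overlapping blocks; any window starting in the overlap would carry the wrong (or no) block label and the ``marker identifies the block'' step fails there. Second, overlapping two translates of a cyclic sequence of period $n_{b-1}$ by $2m$ forces the entries in the overlap to agree with two different phases of that sequence, which generally they do not. The paper instead distributes the deficit as $2m$ \emph{unit} truncations: exactly $2m$ of the copies have their first column deleted, and the marker $\profile(s,m,T_{b-1})$ is engineered so that its $m$-dual is constant precisely on intervals of length $n_{b-1}$ or $n_{b-1}-1$ matching the full and truncated copies (this in turn relies on each profile beginning with $m-1$ zeros, so that taking $m$-duals commutes with concatenation). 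A secondary, fixable imprecision: in the injectivity argument you invoke the inductive hypothesis on ``the first $b-1$ coordinates'' of the whole period-$n_b$ sequence, but that tiling is not $n_{b-1}$-periodic (because of the truncations), so the hypothesis does not apply to it directly; one must first use the marker to place both windows inside copies of $M^{(b-1)}$ and only then apply induction within a copy.
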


To deduce the existence of the construction for
 \cref{th:construction:grid:colouring}, we combine
\cref{th:vector sum packing:to:grid:colouring,th:construction:vector sum packing},
choosing $m$ as a large multiple of $2 b d$,
with $k = b d + 1$ and $s = \frac{m^{d-1}}{2 b d}$.
Then the value of $n_b$ in~\cref{th:construction:vector sum packing}
satisfies $n_b \sim \big( \frac{2}{k-1} \big)^{(k-1)/d}  m^{k-1}$ for large $m$.
We will prove in~\cref{th:decoding:vector:sum:packing} that the position
can be computed with a constant number of arithmetic operations.

We remark that the vector sum packing problem is related to the combinatorial theory
of {\em antimagic labellings} (a natural variant on the classic topic of {\em magic labellings})
in which one is required to label the edges (or vertices) of a graph (or hypergraph)
from a given set of integers (or vectors) so that vertices (or edges) are uniquely
determined by the sum of their incident labels. A well-known conjecture of Ringel
(cited in \cite{hartsfield2013pearls}) states that for any connected graph with $m>1$ edges 
there is an antimagic labelling of its edges by $\{1,\dots,m\}$.
It appears that this connection has not previously been exploited
and may be fruitful for further research.

\section{From Vector Sum Packing to Grid Colouring}
\label{sec:proofs:vector:to:grid}

The aim of this section is to prove \cref{th:vector sum packing:to:grid:colouring},
which reduces the grid colouring problem to the vector sum packing problem.

    \subsection{The Separation Property}

In this section, we consider a fixed dimension $d \geq 2$,
window size $m \geq 2$ and grid size $n \geq m$.
For $\vx \in \integers^d$,
let us define the \emph{window} of corner $\vx$
as the set of points
\[
    \window(\vx) =
    \{\vx + \vc \mid \vc \in [0,m-1]^d\}.
\]
For the grid colouring problem,
we require that the coordinates of each point $\vx$
are uniquely determined modulo $n$ by
the multiset of colours of the points from $\window(\vx)$.
This multiset is denoted by $\multiset(\vx)$,
and the colour of the point $\vx$ is denoted by $\cc(\vx)$.

We now present a sufficient condition, called {\em separation},
that guarantees this property.
Each colour is represented as a pair (pigment, shade).
We divide the set of colours into $d$ pigment classes,
one pigment class $C_i$ for each dimension $i \in [0, d-1]$.
Each pigment class contains $b$ shades.
Thus $C_i = \{c_{i,0}, c_{i,1}, \ldots, c_{i,b-1}\}$
where if $i$ is the green pigment, say, then $c_{i,0}$
represents very light green and $c_{i,b-1}$
represents very dark green.
The idea now is that for each square $\vx \in \integers^d$,
the coordinate $x_i$ modulo $n$ is uniquely determined
by the pigment class $C_i$, for each $i \in [0, d-1]$,
precisely by $\multiset(\vx) \cap C_i$.
In particular, $x_i$ is independent
of any other pigment class $j\neq i$ on $\window(\vx)$,
for example, shades of red.
Evidently, this separation property implies that
if $\vx$ and $\vy$ have the same colour multiset
for every pigment class then $x_i \equiv y_i \bmod n$
for each $i\in [0, d-1]$ and, hence, $\vx \equiv \vy \bmod n$.
Thus the separation property implies
that we have a proper grid colouring.

Rather than working directly with separation,
it will be convenient to consider the following two further conditions 
that together clearly imply separation.

\medskip

\noindent {\tt Dimensional Inconsistency}: Given $\vx$ and $\vy$, if $x_i\neq y_i \bmod n$ then
\[
    \multiset(\vx) \cap C_i \neq \multiset(\vy) \cap C_i.
\]
\noindent {\tt Anti-Dimensional Consistency}: Given $\vx$ and $\vy$, if $x_i = y_i \bmod n$ then
\[
    \multiset(\vx) \cap C_i=\multiset(\vy) \cap C_i.
\]
To ensure anti-dimensional consistency it will be convenient to focus on 
the following condition called {\em quasi-periodicity}, which states that 
if some $\vx$ is coloured a shade of pigment $i$
then any translate $\vy$ of $\vx$ by distance $m$ in any dimension other than $i$
has the same colour as $\vx$
\[
    \forall i \neq j,
    \text{ if } \cc(\vx) \in C_i
    \text{ then } \cc(\vx + m\, \ve_j) = \cc(\vx).
\]

\begin{lemma} \label{lem:quasi-consistency}
Any quasi-periodic grid colouring
satisfies anti-dimensional consistency.
\end{lemma}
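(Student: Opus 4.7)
The plan is to prove, for each colour $c \in C_i$ individually, that the multiplicity of $c$ in $\multiset(\vx)$ depends only on $x_i \bmod n$; summing over $c \in C_i$ as multisets then yields anti-dimensional consistency. Let $A_c = \{\vect{p} \in \integers^d : \cc(\vect{p}) = c\}$, so the multiplicity in question is exactly $|\window(\vx) \cap A_c|$.

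The first step is to show that $A_c$ is invariant under translation by $m \ve_j$ for every $j \neq i$. Quasi-periodicity gives the ``forward'' inclusion $A_c + m \ve_j \subseteq A_c$ immediately: if $\vect{p} \in A_c$ then $\cc(\vect{p}) = c \in C_i$, so $\cc(\vect{p} + m\ve_j) = c$, i.e.\ $\vect{p} + m\ve_j \in A_c$. To upgrade this to equality I would pass to the torus $(\integers/n\integers)^d$ using the $n$-periodicity of the colouring: $A_c$ descends to a finite set $\bar{A}_c$, and translation by $m\ve_j$ is a bijection of the torus carrying $\bar{A}_c$ into itself, hence a bijection of this finite set. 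Thus $\bar{A}_c + m\ve_j = \bar{A}_c$, and lifting back via $n$-periodicity gives $A_c + m\ve_j = A_c$ genuinely in $\integers^d$.

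With $A_c$ now $m$-periodic in each direction $j \neq i$, the count is almost immediate. Slice $\window(\vx)$ into its $m$ layers at heights $x_i, x_i + 1, \ldots, x_i + m - 1$ in direction $i$; each layer is a $(d-1)$-dimensional $m$-cube lying in a hyperplane $\{p_i = h\}$. Because $A_c$ is $m$-periodic in every transverse direction, any $(d-1)$-dimensional $m$-cube inside this hyperplane contains exactly the same number of points of $A_c$; call this count $N_h$, which by $n$-periodicity in direction $i$ depends only on $h \bmod n$. Consequently
\[
    |\window(\vx) \cap A_c| \;=\; \sum_{c_i = 0}^{m-1} N_{x_i + c_i}
\]
depends only on $x_i \bmod n$, as required.

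The only substantive step is the promotion of the forward inclusion $A_c + m\ve_j \subseteq A_c$ to genuine $m$-translation invariance, since quasi-periodicity cannot be run ``backwards'' directly on a point whose colour is not yet known to lie in $C_i$; the finiteness of the torus resolves this cleanly. Everything else is a bookkeeping exercise on the slices of $\window(\vx)$.
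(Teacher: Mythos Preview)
Your argument is correct. It differs from the paper's in its organisation: the paper does not pass to global $m$-periodicity of the colour class $A_c$ and then slice, but instead compares $\window(\vx)$ with $\window(\vx+\ve_j)$ directly for each $j\neq i$, observing that their symmetric difference consists of an ``exit face'' $A$ and an ``enter face'' $B=A+m\ve_j$, so quasi-periodicity forces the $c_{i,\ell}$-counts in $A$ and $B$ to agree, whence the window count is invariant under each unit transverse step. Your route is slightly longer but has the merit of being explicit about the one genuine subtlety---that quasi-periodicity as stated only gives $A_c+m\ve_j\subseteq A_c$, and one needs finiteness of the torus (equivalently, $n$-periodicity) to upgrade this to equality. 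The paper's one-step argument needs the same upgrade (otherwise it only yields $f(\vx)\le f(\vx+\ve_j)$), which it leaves implicit; your treatment makes it visible. Either argument is perfectly adequate here.
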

\begin{proof}
Take a quasi-periodic colouring.
It suffices to show, given $x_i$,
that $\multiset(\vx) \cap C_i$ is fixed.
This will hold if the number of squares of colour $c_{i,\ell}$
in a window does not change if
we translate by one square in any dimension $j$ other than $i$.
Let us define
\[
    A = \window(\vx) \setminus \window(\vx + \ve_j),
    \qquad
    B = \window(\vx + \ve_j) \setminus \window(\vx).
\]
Since $B$ is obtained from $A$ by translation by $m \ve_j$,
quasi-periodicity implies
that the number of points of colour $c_{i,\ell}$
in $A$ and $B$ are equal.
Thus, the number of points of colour $c_{i,\ell}$
in $\window(\vx)$ and $\window(\vx + \ve_j)$
are equal as well.
As this argument applies for any $j\neq i$, the anti-dimensional consistency property holds. 
Thus, quasi-periodicity implies anti-dimensional consistency.
\end{proof}

    \subsection{Separation via Vector Sum Packing}

Assume we have an $(n,m,b,s)$-vector sum packing
$(\vz_j)_{j \in \integers}$
of size $n$ and window size $m$,
containing vectors in $[0,s]^b$,
where $s$ is a positive integer equal to
$\frac{m^{d-1}}{b d}$ for some dimension $d \geq 2$.
We will now construct a $(n, m, d, k)$-grid colouring
with number of colours $k = b d + 1$.
We detail our algorithm below
and illustrate it in \cref{fig:construction:grid:coloring}.
As we saw in the previous section,
to ensure the separation condition,
it is sufficient that our grid colouring
satisfies dimensional inconsistency and quasi-periodicity.

\begin{itemize}
\item[(a)]
We begin with an initial colouring of the grid
using only $d$ pigments, each coming in $b$ different shades.
Take a pigment $i \in [0, d-1]$ and shade $h \in [0, b-1]$.
Then the point $\vx = (x_0, \ldots, x_{d-1})$ has pigment
$i$ and shade $h$ if and only if
\[
    \sum_{j=0}^{d-1} x_j \equiv i + h d \bmod b d.
\]
This initial colouring is quasi-periodic.
But, of course, it does not satisfy dimensional inconsistency.
To rectify this, we will apply the last unused colour,
which we call {\em blank},
to erase some colours from the initial colouring.
\item[(b)]
Consider a dimension $i \in [0, d-1]$.
We associate to it the pigment $i$
and the set $C_i$ of the corresponding $b$ shades.
For each $j \in \integers$, let
\[
    B_{i,j} =
    \{\vx \mid x_i = j \text{ and } \forall \ell \neq i,\ x_{\ell} \in [0, m-1]\}.
\]
We apply the blank colour
to erase some of the colours from $B_{i,j}$,
so that the number of points of shade $c_{i,\ell}$
is equal to the $\ell$th component of $\vz_j$.
This is always possible, because
in the initial colouring,
$B_{i,j}$ contains $s = \frac{m^{d-1}}{b d}$
occurrences of shade $c_{i,\ell}$,
and the values of the vectors from the vector sum packing
are all in $[0,s]$.
\item[(c)]
We apply quasi-periodicity to
reproduce this construction
on the rest of the grid.
Specifically, point $\vx = (x_0, \ldots, x_{d-1})$
of pigment $i$ and shade $h$
in the initial colouring is erased
if and only if
the point $\vy$ with
$y_i = x_i$ and for all $j \neq i$,
$y_j = x_j \bmod m$
was erased at step (b).
\end{itemize}
The injectivity of the vector sum packing implies
the dimensional inconsistency of our grid colouring.
By construction, our grid colouring is quasi-periodic,
so by \cref{lem:quasi-consistency},
it satisfies the separation property,
concluding the proof
of \cref{th:vector sum packing:to:grid:colouring}.

\subsection{An Example}
\label{sec:example}

\begin{figure}[h!]
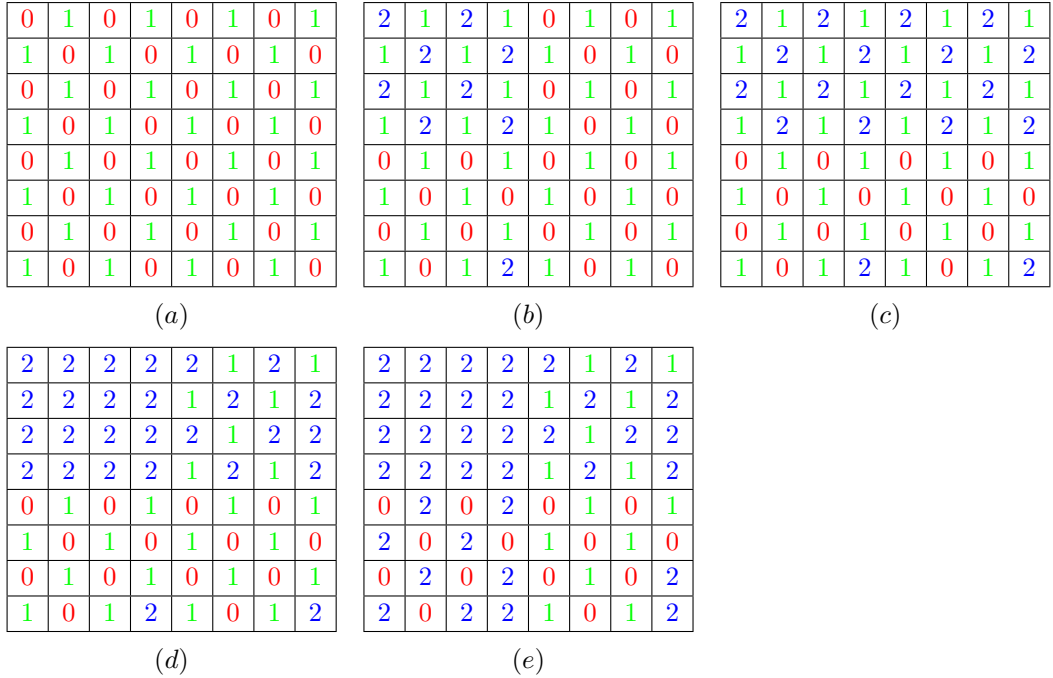

\begin{center}
$\begin{array}{ccc}
\begin{array}{|c|c|c|c|c|c|c|c|}
\hline
\ka & \kb & \ka & \kb & \ka & \kb & \ka & \kb
\\ \hline
\kb & \ka & \kb & \ka & \kb & \ka & \kb & \ka
\\ \hline
\ka & \kb & \ka & \kb & \ka & \kb & \ka & \kb
\\ \hline
\kb & \ka & \kb & \ka & \kb & \ka & \kb & \ka
\\ \hline
\ka & \kb & \ka & \kb & \ka & \kb & \ka & \kb
\\ \hline
\kb & \ka & \kb & \ka & \kb & \ka & \kb & \ka
\\ \hline
\ka & \kb & \ka & \kb & \ka & \kb & \ka & \kb
\\ \hline
\kb & \ka & \kb & \ka & \kb & \ka & \kb & \ka
\\ \hline
\end{array}
&
\begin{array}{|c|c|c|c|c|c|c|c|}
\hline
\kc & \kb & \kc & \kb & \ka & \kb & \ka & \kb
\\ \hline
\kb & \kc & \kb & \kc & \kb & \ka & \kb & \ka
\\ \hline
\kc & \kb & \kc & \kb & \ka & \kb & \ka & \kb
\\ \hline
\kb & \kc & \kb & \kc & \kb & \ka & \kb & \ka
\\ \hline
\ka & \kb & \ka & \kb & \ka & \kb & \ka & \kb
\\ \hline
\kb & \ka & \kb & \ka & \kb & \ka & \kb & \ka
\\ \hline
\ka & \kb & \ka & \kb & \ka & \kb & \ka & \kb
\\ \hline
\kb & \ka & \kb & \kc & \kb & \ka & \kb & \ka
\\ \hline
\end{array}
&
\begin{array}{|c|c|c|c|c|c|c|c|}
\hline
\kc & \kb & \kc & \kb & \kc & \kb & \kc & \kb
\\ \hline
\kb & \kc & \kb & \kc & \kb & \kc & \kb & \kc
\\ \hline
\kc & \kb & \kc & \kb & \kc & \kb & \kc & \kb
\\ \hline
\kb & \kc & \kb & \kc & \kb & \kc & \kb & \kc
\\ \hline
\ka & \kb & \ka & \kb & \ka & \kb & \ka & \kb
\\ \hline
\kb & \ka & \kb & \ka & \kb & \ka & \kb & \ka
\\ \hline
\ka & \kb & \ka & \kb & \ka & \kb & \ka & \kb
\\ \hline
\kb & \ka & \kb & \kc & \kb & \ka & \kb & \kc
\\ \hline
\end{array}
\\[-0.3cm]
\\
(a) & (b) & (c)
\\[5pt]
\begin{array}{|c|c|c|c|c|c|c|c|}
\hline
\kc & \kc & \kc & \kc & \kc & \kb & \kc & \kb
\\ \hline
\kc & \kc & \kc & \kc & \kb & \kc & \kb & \kc
\\ \hline
\kc & \kc & \kc & \kc & \kc & \kb & \kc & \kc
\\ \hline
\kc & \kc & \kc & \kc & \kb & \kc & \kb & \kc
\\ \hline
\ka & \kb & \ka & \kb & \ka & \kb & \ka & \kb
\\ \hline
\kb & \ka & \kb & \ka & \kb & \ka & \kb & \ka
\\ \hline
\ka & \kb & \ka & \kb & \ka & \kb & \ka & \kb
\\ \hline
\kb & \ka & \kb & \kc & \kb & \ka & \kb & \kc
\\ \hline
\end{array}
&
\begin{array}{|c|c|c|c|c|c|c|c|}
\hline
\kc & \kc & \kc & \kc & \kc & \kb & \kc & \kb
\\ \hline
\kc & \kc & \kc & \kc & \kb & \kc & \kb & \kc
\\ \hline
\kc & \kc & \kc & \kc & \kc & \kb & \kc & \kc
\\ \hline
\kc & \kc & \kc & \kc & \kb & \kc & \kb & \kc
\\ \hline
\ka & \kc & \ka & \kc & \ka & \kb & \ka & \kb
\\ \hline
\kc & \ka & \kc & \ka & \kb & \ka & \kb & \ka
\\ \hline
\ka & \kc & \ka & \kc & \ka & \kb & \ka & \kc
\\ \hline
\kc & \ka & \kc & \kc & \kb & \ka & \kb & \kc
\\ \hline
\end{array}
&
\\[-0.3cm]
\\
(d) & (e) & 
\end{array}$
\caption{Illustration of the steps of the proof of \cref{th:vector sum packing:to:grid:colouring}
on a grid of dimension $d = 2$, size $n = 8$, window size $m = 4$,
number of colours $k = 3$.}
\label{fig:construction:grid:coloring}
\end{center}
\end{figure}

An illustration of the proof of \cref{th:vector sum packing:to:grid:colouring}
is given in \cref{fig:construction:grid:coloring}
on a grid of dimension $d = 2$, size $n = 8$, window size $m = 4$,
number of colours $k = 3$. To create the grid-colouring
we use the vector sum packing $(\vz_0, \vz_1, \dots, \vz_7)=(0,0,0,0,2,2,2,1)$
(note each vector has dimension $1$, so is represented by its content).
In particular, $s=2$ and $b=1$; note that $k=bd+1$ and $s=\frac{m^{d-1}}{bd}$.
Recall, the aim is that number of occurrences of $\ka$ (\resp $\kb$) in a $4$ by $4$ square characterizes its row (\resp column) number.
Following the proof of \cref{th:vector sum packing:to:grid:colouring}, we
start with a periodic colouring, represented in $(a)$.
We now use the blank colour $\kc$ to erase some of the $\ka$.
First, in $(b)$, we erase entries in the first $4$ columns
so the number of occurrences of $\ka$ in these columns 
for the eight rows are $(0,0,0,0,2,2,2,1)$.
Second in $(c)$, we apply quasi-periodicity on the rest of the grid.
Next, in $(d)$ and $(e)$, we apply the same approach to  
erase some of the $\kb$. The final result is $(e)$.
No two $4 \times 4$ subsquares contain the same multiset of colours.

Let us now illustrate how this grid is used for localization.
Recall that our convention is
to number the rows from top to botton,
and column from left to right,
both starting at $0$.
Assume we are measuring in a window (\ie a $4 \times 4$ subsquare)
the multiset of colors containing
$5$ occurrences of $\ka$,
$3$ occurrences of $\kb$
and $8$ occurrences of $\kc$.
We wish to locate this window in the grid (e).
The naive algorithm is to consider each possible window in the grid
and compare the multisets of colors.
This becomes costly for large grids,
so we present a more efficient algorithm.
By convention, color $\ka$ is used to determine the row.
Looking at the vector sum packing $(0,0,0,0,2,2,2,1)$,
we observe that the sequence whose $j$th element
is the sum of $m=4$ consecutive elements starting at position $j$,
is $(0,2,4,6,7,5,3,1)$.
The number $5$ is located at position $5$ in this sequence,
so the upper-left corner of the window we are seeking
has row number $5$.
To determine the column, we consider the color $\kb$.
Its number of occurrence $3$ has position $6$
in 
$(0,2,4,6,7,5,3,1)$,
so the column number is $6$.
On the torus (e), the $4 \times 4$ subsquare with top left corner
in row $5$ and column $6$ indeed contains
$5$ occurrences of $\ka$,
$3$ occurrences of $\kb$
and $8$ occurrences of $\kc$.

Observe that for any fixed dimension,
the localization problem in the grid
reduces in constant complexity
to the problem of computing the position of a vector
in a vector sum packing.
We call this second problem \emph{decoding}.
In the next section, we will present our construction
for vector sum packings,
as well as a decoding algorithm
with constant complexity
(in the number of arithmetic operations,
as the dimension $d$ and parameter $b$, defined there,
are fixed).

A grid colouring of size $256$,
window size $8$,
and $5$ colours is presented in the appendix.

\section{Vector Sum Packing} \label{sec:proof:vector:colouring}

This section presents our construction of vector sum packings,
thus proving \cref{th:construction:vector sum packing},
which is the last missing ingredient for the proof of our main result
\cref{th:construction:grid:colouring}.

\subsection{Profiles and Duals} \label{sec:profiles}

To build vector sum packings (see \cref{def:vector sum packing}),
we introduce certain integer sequences that we call \emph{profiles}.
Profiles with different parameters will be used
to fill the coordinates of the vector sum packing,
in \cref{th:profile:to:vector:colouring}.

The \emph{$m$-dual} of a profile $w$
is an integer sequence of same length $|w|$,
defined as the sum of the elements of $w$
on a cycling window of length $m$
\[
    \dual_m(w) =
    \bigg(
        \sum_{j=i}^{i+m-1} w_{i \bmod{|w|}}
    \bigg)_{i \in [0, |w|-1]}.
\]

Consider integers $s \geq 1$ and $m \geq 2$.
Let us write sequences of length $1$ as $( \sigma )$
and sequences of length $m$ as $(\sigma_0, \sigma_1, \ldots, \sigma_{m-1})$.
For a finite sequence $L$ and a nonnegative integer $T$,
the sequence obtained by concatenating $T$ copies of $L$
one after the other is denoted by $L^T$.
Let $\emptyset$ denote the emptysequence, and let $a \cdot b$
denote the concatenation of the sequences $a$ and $b$,
so $(1,2,3) \cdot (4)$ is equal to $(1,2,3,4)$.
In the following tables, the rows and columns
are numbered starting at $0$, in {\color{red} red}.
Straight lines have been added between some of the rows
to distinguish parts of the tables
following different rules.

Let us define the sequence $\profile(s,m,0)$ as the concatenation
of the cells from the following table,
read line by line from top left to bottom right.
\[
\resizebox{\textwidth}{!}{
$\begin{array}{c|ccccc}
& \color{red} 0 & \color{red} 1 & \color{red} 2 & \color{red} \cdots & \color{red} s-1
\\ \hline \\[-10pt]
\color{red} 0
& (0,\dots,0)
& (2,\dots,2)
& (4,\dots,4)
& \cdots
& (2s-2,\dots,2s-2)
\\ \hline \\[-10pt]
\color{red} 1
& (2s,\dots,2s,2s-1)
& (2s-2,\dots,2s-2,2s-3)
& (2s-4,\dots,2s-4,2s-5)
& \cdots
& (2,\dots,2,1)
\end{array}$}
\]
For example, we have
$\profile(1,3,0) = (0,0,0,2,2,1)$
and $\profile(2, 2, 0) = (0,0,2,2,4,3,2,1)$.

\begin{lemma}
\label{lem:profile:zero} 
$\profile(s,m,0)$ has length $2ms$.
Furthermore, its $m$-dual is
\[
    \dual_m(\profile(s,m,0)) =
    (0, 2, 4, \dots, 2ms-2, 2ms-1, 2ms-3, 2ms-5, \ldots, 1).
\]
\end{lemma}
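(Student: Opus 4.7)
The length statement is immediate from the table: two rows of $s$ cells, each contributing a length-$m$ sequence, for a total of $2ms$.

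For the dual, my plan is to exploit the telescoping identity
\[
    \dual_m(w)_{i+1} - \dual_m(w)_i = w_{(i+m) \bmod 2ms} - w_{i \bmod 2ms},
\]
so that the entire dual is determined by one base value together with the forward differences of $w$ at lag $m$. Writing $w = \profile(s,m,0)$, the first $m$ entries all lie in the $(0,0)$-cell of the table and equal $0$, so $\dual_m(w)_0 = 0$.

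I would then enumerate the differences $w_{(i+m) \bmod 2ms} - w_i$ by four cases. For $i \in [0, ms-2]$, both entries are ``bulk'' values of their cells (in the subrange $i \in [ms-m, ms-2]$ where $i+m$ crosses into row $1$, it lands in a non-tail slot of the first row-$1$ cell, with value $2s$), so the difference is $+2$. For $i = ms-1$, the entry $w_{i+m}$ instead hits the tail $2s-1$ of the first row-$1$ cell while $w_i = 2s-2$, giving $+1$. For $i \in [ms, 2ms-2]$, both $i$ and $i+m$ lie in row $1$ and occupy the same offset within their respective cells; since the shift is by exactly $m$, tails align with tails and bulks with bulks, so the difference is uniformly $-2$. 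Finally, for $i = 2ms-1$ the final tail $w_i = 1$ maps to $w_{m-1} = 0$, yielding $-1$.

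Accumulating these increments from $\dual_m(w)_0 = 0$: the first $ms-1$ steps of $+2$ produce $(0, 2, \ldots, 2ms-2)$; the single $+1$ step produces $2ms-1$; the next $ms-1$ steps of $-2$ produce $(2ms-3, 2ms-5, \ldots, 1)$; and the final $-1$ closes the cycle back to $0$, matching the claimed formula. I expect the only delicate point to be the row-$1$ case, where one must verify that the tail slot of each row-$1$ cell aligns with a tail slot under the shift by $m$; this alignment is precisely what collapses what could be a subcase analysis on tail/bulk pairs into the single clean value $-2$.
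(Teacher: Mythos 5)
Your proof is correct and is in fact considerably more detailed than the paper's own, which computes the length by counting the $2s$ cells of size $m$ exactly as you do but then simply asserts that ``the reader may easily verify'' the shape of the $m$-dual. The telescoping identity $\dual_m(w)_{i+1}-\dual_m(w)_i = w_{(i+m)\bmod 2ms}-w_{i\bmod 2ms}$ is a clean way to organise that verification: it reduces the whole computation to the base value $\dual_m(w)_0=0$ plus the lag-$m$ differences, and your identification of the two exceptional steps ($+1$ at $i=ms-1$, $-1$ at $i=2ms-1$) is exactly where the even-increasing run hands over to the odd-decreasing run. One small imprecision: in your third case you assert that for every $i\in[ms,2ms-2]$ both $i$ and $i+m$ lie in row $1$; this fails for $i\in[2ms-m,2ms-2]$, where $i+m$ wraps around into the all-zero first cell of row $0$. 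The conclusion still holds there --- such $i$ occupy bulk (non-tail) slots of the last row-$1$ cell, so $w_i=2$ while $w_{(i+m)\bmod 2ms}=0$, again giving a difference of $-2$ --- but this wrap-around sub-case deserves the same explicit treatment you gave to $i=2ms-1$, since the reason for the $-2$ is different from the tail-aligns-with-tail argument you invoke for the interior of row $1$.
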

\begin{proof}
$\profile(s,m,0)$ has length $2ms$ because each entry in the table is a sequence of cardinality $m$ and there
are $2s$ entries in the table. The reader may easily verify that the $m$-dual begins with non-negative even numbers increasing up to $2ms-2$ followed by positive odd numbers 
decreasing down from $2ms-2$. Thus, the $m$-dual contains every integer in $[0, 2m-1]$ exactly once.
\end{proof}

Let us define the $(s,m,0)$-decoding function
as the function that associates
to an integer $v$ its smallest index in the $m$-dual of $\profile(s,m,0)$.
For example, the $(2,2,0)$-decoding function sends $6$ to $3$,
because the $2$-dual of $\profile(2, 2, 0) = (0,0,2,2,4,3,2,1)$
is $(0,2,4,6,7,5,3,1)$ ,
where the first (and only) occurrence of $6$ is at position $3$.

\begin{corollary}
\label{th:decode:profile:zero}
The $(s,m,0)$-decoding function is
\[
    v \mapsto
    \begin{cases}
        \frac{v}{2} & \text{if $v$ is even},
        \\
        2 m s - 1 - \frac{v-1}{2} & \text{if $v$ is odd}.
    \end{cases}
\]
It is computable in a constant number of arithmetic operations.
\end{corollary}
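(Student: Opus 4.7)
The plan is to read the formula directly off the explicit description of $\dual_m(\profile(s,m,0))$ furnished by \cref{lem:profile:zero}. That lemma tells us the dual has length $2ms$ and equals
\[
(0, 2, 4, \dots, 2ms-2,\; 2ms-1, 2ms-3, 2ms-5, \ldots, 1),
\]
which is a permutation of $[0, 2ms-1]$; in particular, each $v$ occurs exactly once, so the ``smallest index'' qualifier in the definition of the decoding function is redundant and the sought index is unambiguously determined by $v$.

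First I would split on the parity of $v$. If $v$ is even, then $v \in \{0,2,\ldots,2ms-2\}$, and these values occupy positions $0,1,\ldots,ms-1$ of the dual in increasing order; so the position of $v = 2j$ is $j = v/2$. If $v$ is odd, then $v \in \{1,3,\ldots,2ms-1\}$, and these appear at positions $ms, ms+1, \ldots, 2ms-1$ in decreasing order. Writing the entry at position $ms + i$ as $2ms - 1 - 2i$ and solving $v = 2ms - 1 - 2i$ yields $i = (2ms - 1 - v)/2$, so the position is
\[
ms + \frac{2ms - 1 - v}{2} = 2ms - 1 - \frac{v-1}{2},
\]
which matches the second branch.

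Finally, for the computational claim, each branch of the case distinction is a fixed arithmetic expression in $v$, $m$ and $s$ using only a constant number of additions, subtractions, one division by $2$, and a parity test; this yields the $O(1)$ arithmetic-operation bound.

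There is no real obstacle here: the corollary is a direct unpacking of \cref{lem:profile:zero}, and the only thing to double-check is the off-by-one bookkeeping on the odd-$v$ branch, which the computation above confirms.
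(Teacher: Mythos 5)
Your proof is correct and matches the paper's (implicit) argument: the paper states this corollary as an immediate consequence of \cref{lem:profile:zero}, and your derivation simply makes explicit the position bookkeeping for the even (increasing) and odd (decreasing) halves of the dual sequence. The off-by-one check on the odd branch is right, as confirmed by the paper's example where the $(2,2,0)$-decoding sends $6$ to $3$ and $7$ to $4$.
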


For any positive integer $T$ and $m \geq 2$,
let us define the sequence $\profile(s,m,T)$
as the concatenation of the cells from the following table.
\[
\resizebox{\textwidth}{!}{
$\begin{array}{c|ccccc}
& \color{red} 0 & \color{red} 1 & \color{red} 2 & \color{red} \cdots & \color{red} s-1
\\ \hline \\[-10pt]
\color{red} 0
& (0,\dots,0,0)^T
& (0,\dots,0,2)^T
& (0,\dots,0,4)^T
& \cdots
& (0,\dots,0,2s-2)^T
\\
\color{red} 1
& (0,\dots,0,0,2s)^T
& (0,\dots,0,2,2s)^T
& (0,\dots,0,4,2s)^T
& \cdots
& (0,\dots,0,2s-2,2s)^T
\\
\color{red} \vdots & \vdots &\vdots &  &  & \vdots
\\
\color{red} m-1
& (0,2s,\dots,2s,2s)^T
& (2,2s,\dots,2s,2s)^T
& \cdots
& \cdots
& (2s-2,2s,\dots,2s)^T
\\ \hline \\[-10pt]
\color{red} m
& (2s)^{mT-1}
& (2s-1, 2s, \dots, 2s)^T
& (2s-3, 2s, \dots, 2s)^T
& \cdots
& (3, 2s, \dots, 2s) ^T
\\ \hline \\[-10pt]
\color{red} m+1
& (1, 2s, 2s, \dots, 2s)^{T-1}
& (1,2s,\dots,2s,2s-2)^T
& (1,2s,\dots,2s,2s-4)^T
& \cdots
& (1,2s,\dots,2s,2)^T
\\ \hline \\[-10pt]
\color{red} m+2
& (1,2s,\dots,2s,0)^T
& (1,2s,\dots,2s-2,0)^T
& (1,2s,\dots, 2s ,2s-4,0)^T
& \cdots
& (1,2s,\dots, 2s,2,0)^T
\\
\color{red} \vdots & \vdots &\vdots &  &  & \vdots
\\
\color{red} 2m-1
& (1,2s,0,\dots,0)^T
& (1,2s-2,0,\dots,0)^T
& \cdots
& \cdots
& (1,2,0,\dots,0)^T
\\ \hline \\[-10pt]
\color{red} 2m
& (1,0,\dots,0)^{T-1} \cdot (1)
& \emptyset
& \emptyset
& \cdots
& \emptyset
\end{array}$}
\]

\begin{lemma}
\label{lem:profile:T}
For any $m \geq 2$,
$\profile(s,m,T)$ has length $m ((2 m s + 1) T - 2)$.
Furthermore, its $m$-dual is obtained by concatenation
of the cells of the following table
\[
\resizebox{\textwidth}{!}{
$\begin{array}{c|ccccc}
& \color{red} 0 & \color{red} 1 & \color{red} 2 & \color{red} \cdots & \color{red} s-1
\\ \hline \\[-10pt]
\color{red} 0
& (0)^{m T}
& (2)^{m T}
& (4)^{m T}
& \cdots
& (2s-2)^{m T}
\\ \hline \\[-10pt]
\color{red} 1
& (2s)^{m T - 1}
& (2s+2)^{m T}
& (2s+4)^{m T}
& \cdots
& (4s-2)^{m T}
\\
\color{red} 2
& (4s)^{m T - 1}
& (4s+2)^{m T}
& (4s+4)^{m T}
& \cdots
& (6s-2)^{m T}
\\
\color{red} \vdots & \vdots & \vdots & \vdots && \vdots
\\
\color{red} m-1
& (2(m-1)s)^{m T-1}
& (2(m-1)s+2)^{m T}
& (2(m-1)s+4)^{m T}
& \cdots
& (2ms -2)^{m T}
\\ \hline \\[-10pt]
\color{red} m
& (2ms)^{mT-1}
& (2 m s-1)^{m T}
& (2 m s - 3)^{m T}
& \cdots
& (2 (m-1) s + 3)^{m T}
\\ \hline \\[-10pt]
\color{red} m+1
& (2 (m-1) s + 1)^{m T - 1}
& (2 (m-1) s - 1)^{m T}
& (2 (m-1) s - 3)^{m T}
& \cdots
& (2 (m-2) s + 3)^{m T}
\\
\color{red} \vdots & \vdots & \vdots & \vdots && \vdots
\\
\color{red} 2m-1
& (2 s + 1)^{m T - 1}
& (2 s - 1)^{m T}
& (2 s - 3)^{m T}
& \cdots
& (3)^{m T}
\\ \hline \\[-10pt]
\color{red} 2m
& (1)^{m T - 1}
& \emptyset
& \emptyset
& \cdots
& \emptyset
\end{array}$}
\]
\end{lemma}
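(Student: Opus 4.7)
The plan is to verify the two claims of the lemma—the length formula and the form of the $m$-dual—by direct computation, the first by summation and the second by position-wise analysis of windows.

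For the length, I would sum the contributions of each row of the defining table: rows $0$ through $m-1$ each supply $s \cdot m T$ entries, contributing $m^2 s T$; row $m$ supplies $(m T - 1) + (s-1) m T = s m T - 1$; row $m+1$ supplies $m(T-1) + (s-1) m T = s m T - m$; rows $m+2$ through $2m-1$ supply $(m-2) s m T$; and row $2m$ supplies $m(T-1) + 1 = m T - m + 1$. Adding these gives $2 m^2 s T + m T - 2 m = m\bigl((2 m s + 1) T - 2\bigr)$, as claimed.

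For the $m$-dual, the first observation is that every cell of $\profile(s,m,T)$, apart from the one-entry truncations appearing in row $m$ cell $0$, row $m+1$ cell $0$, and row $2m$, consists of $T$ identical blocks of length $m$. Hence whenever the length-$m$ window lies entirely within a single cell, the dual value equals the sum of one block, which can be read off directly from the defining table: row $\ell$ cell $h$ with $0 \leq \ell \leq m-1$ has block sum $2\ell s + 2h$; row $m$ cell $h$ with $h \geq 1$ has block sum $2 m s - 2 h + 1$; row $m+1$ cell $h$ with $h \geq 1$ has block sum $2(m-1)s - 2h + 1$; rows $m+2, \ldots, 2m-1$ follow the analogous pattern $2(2m - \ell')s - 2h + 1$; and the near-constant cells give block sums $2ms$, $2(m-1)s + 1$, and $1$ respectively. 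These match the bulk values in the claimed dual table. What remains is to handle the $m - 1$ windows straddling each cell boundary. The key design feature is that each block's nonzero entries are packed at the right edge in the ``ascending'' rows $0, \ldots, m$ and at the left edge (after the leading $1$) in the ``descending'' rows $m+1, \ldots, 2m$. This forces a straddling window to retain the outgoing cell's interior value until the distinguishing entry of the incoming cell enters the window, at which point it jumps to the incoming interior value. Whether the crossover happens exactly at the cell boundary (yielding a dual run of length $mT$, corresponding to cells $h \geq 1$) or one position earlier (yielding a run of length $mT - 1$, corresponding to cells $h = 0$) is determined by comparing the positions of the outgoing and incoming distinguishing entries within their respective blocks.

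The main obstacle is the systematic bookkeeping over all transition types: adjacent cells within each row, the row-to-row transitions within the ascending block (rows $0$ to $m$), the two somewhat delicate transitions from row $m$ to $m+1$ and from row $m+1$ to $m+2$ (where both the truncations and the leading $1$'s must be aligned with the incoming ``descending'' structure), the analogous transitions within the descending block, and finally the wrap-around from row $2m$ to row $0$ (which must mesh cleanly with the truncations of row $2m$ and row $m$ cell $0$ because the sequence is cyclic of the total length computed in the first part). In each case the computation reduces to a finite check: for $k = 1, \ldots, m-1$, determine whether the window starting $m - k$ positions before the incoming cell includes the distinguishing entry of the incoming cell's first block, and sum the contributions from each side accordingly. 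Once this check is carried out for every transition, the entire $m$-dual matches the table cell by cell.
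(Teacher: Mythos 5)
Your proposal is correct and follows essentially the same route as the paper, which computes the length by summing the cell lengths of the defining table and then simply asserts that ``the reader may verify'' the form of the $m$-dual. Your treatment of the dual is in fact more explicit than the paper's: the block-sum values you derive for windows interior to a cell match the stated table, and the boundary analysis you outline (locating the single distinguishing entry of each block to decide whether a run has length $mT$ or $mT-1$) is precisely the finite check the paper leaves to the reader.
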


\begin{proof}
In the table, there are $2 m s - 2$ cells containing sequences of length $m T$,
one cell containing a sequence of length $m T - 1$,
one cell containing a sequence of length $m (T - 1)$
and one celle containing a sequence of length $m (T - 1) + 1$,
so
$\profile(s,m,T)$ has length
\[
    (2 m s - 2) m T + m T - 1 + m (T - 1) + m (T - 1) + 1
    =
    m T (2 m s + 1) - 2 m
\]
as desired.
Again, the reader may verify that the $m$-dual begins with non-negative even numbers 
increasing up to $2ms$ followed by positive odd numbers 
decreasing down from $2ms-1$ (repeated in the quantities specified).
\end{proof}

We define the $(s,m,T)$-decoding function
as the function that associates
to an integer $v$ its smallest index in the $m$-dual of $\profile(s,m,T)$.

\begin{corollary}
\label{th:decode:profile}
The $(s,m,T)$-decoding function output on the input $v$
is computed using the following algorithm.
If $v$ is even, we define
$r = \lfloor \frac{v}{2 s} \rfloor$
and $c =  \frac{v}{2} \bmod [s]$.
They represent the row and column in the $m$-dual from \cref{lem:profile:T}.
Then the output of the decoding function is
\[
    (r\, m\, T\, s - r + 1) \indic_{r > 0}
    + (c\, m\, T - 1 + \indic_{r = 0}) \indic_{c > 0}.
\]
Otherwise, $v$ is odd and we define
$r = \lfloor \frac{2\, m\, s - v + 1}{2 s} \rfloor$
and $c = \frac{2\, m\, s - v + 1}{2} \bmod s$.
The output is then
\[
    1 + m (m\, T\, s - 1)
    + (m\, T\, s\, r - r) \indic_{r > 0}
    + (c\, m\, T\, - 1) \indic_{c > 0}.
\]
It is computable in a constant number of arithmetic operations.
\end{corollary}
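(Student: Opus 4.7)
The plan is to verify the two formulas by exploiting the explicit description of $\dual_m(\profile(s,m,T))$ provided in \cref{lem:profile:T}. That table presents the dual as a concatenation (row by row, left to right) of cells, each cell being a sequence of identical copies of a single integer value, and each of the integers in $\{0,1,\dots,2ms\}$ appearing in exactly one cell. Consequently, the smallest index of a value $v$ in the dual equals the starting position of its unique cell, and the proof reduces to two mechanical steps: (i) identify the cell $(r,c)$ containing $v$, and (ii) compute its starting position by accumulating the lengths of all preceding cells.

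For the even case, inspection of the table shows that the cell in row $r\in\{0,\dots,m\}$, column $c\in\{0,\dots,s-1\}$ contains $2rs+2c$, so the Euclidean decomposition $v/2=rs+c$ with $0\le c<s$ picks out the cell. Row $0$ has length $smT$, whereas every subsequent row $r\ge 1$ that contains an even value has length $(mT-1)+(s-1)mT=smT-1$ (one fewer copy in column $0$). Within a row, column $0$ occupies $mT$ entries if $r=0$ and $mT-1$ entries if $r\ge 1$, and each column $c\ge 1$ occupies $mT$ entries. Summing these contributions yields exactly $(rmTs-r+1)\indic_{r>0}+(cmT-1+\indic_{r=0})\indic_{c>0}$, with the indicator terms absorbing the two boundary adjustments.

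For the odd case, the substitution $u=(2ms+1-v)/2$ bijects the odd values $\{1,3,\dots,2ms-1\}$ with $u\in\{1,\dots,ms\}$; writing $u=rs+c$ with $0\le c<s$, the table places the unique cell containing $v$ at row $m+r$, column $c$. The prefix of the dual up to the start of row $m$ has length $smT+(m-1)(smT-1)=1+m(mTs-1)$, giving the constant term. From there, reaching cell $(m+r,c)$ requires $r(smT-1)=rmTs-r$ additional entries when $r\ge 1$ (the $r$ intervening rows from $m$ to $m+r-1$), plus an in-row offset of $(mT-1)+(c-1)mT=cmT-1$ when $c\ge 1$. Collecting these into the two indicator terms gives the displayed formula. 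The constant-arithmetic claim is immediate because each formula is a fixed expression in $v,m,s,T$ using only a bounded number of divisions, modulos, multiplications, additions, and comparisons.

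The only real obstacle is the careful bookkeeping of edge cases: row $0$ is one entry longer than the other rows in the even block, column $0$ of a row $r\ge 1$ is one entry shorter than its other columns, and row $m$ straddles the parity transition (its column $0$ is even while its columns $c\ge 1$ are odd). The indicators $\indic_{r>0}$, $\indic_{r=0}$, and $\indic_{c>0}$ in the statement are precisely what is needed to absorb these off-by-one effects, and the verification is a matter of checking each boundary case (namely $r=0,c=0$; $r=0,c\ge 1$; $r\ge 1,c=0$; $r\ge 1,c\ge 1$) separately against the sum just described.
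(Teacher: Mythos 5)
Your proof is correct and follows exactly the route the paper intends: the paper states this corollary without proof, leaving it as a direct read-off from the table of $\dual_m(\profile(s,m,T))$ in \cref{lem:profile:T}, and your computation (locate the unique cell containing $v$ via the Euclidean decomposition, then sum the lengths of the preceding cells, with the indicators absorbing the off-by-one effects at row $0$, column $0$, and the parity transition at row $m$) is precisely that verification, carried out correctly in all the boundary cases.
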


\subsection{Generating a Vector Sum Packing} \label{sec:generating:vector:sum:packing}

We will explicitly construct
the vector sum packing by combining profiles,
thus proving \cref{th:construction:vector sum packing}.

Recall that an $(n,m,b,s)$-vector sum packing
is characterised by its pattern
$(\vz_0, \dots, \vz_{n-1})$
where each $\vz_i$ is a vector in $[0,s]^b$.
These vectors will be defined as
the columns of a matrix $M^{(b)}$.
Furthermore, the rows of this matrix will be constructed using sequences
given by the $\profile(s,m,T)$.

The matrix $M^{(b)}$ will have $b$ rows and $m\cdot T_b=n_b$
columns. Specifically, given $s \geq 1$ and $m \geq 2$,
we set $T_0 = 0$ and $T_1 = 2s$. Then, for each $b \geq 1$,
we recursively set
\[
    T_{b+1} = (2 m s + 1) T_b - 2.
\]
Thus we obtain the dimensions of our $M^{(b)}$ matrices.
To fill in the entries of the matrices we again apply a
recursive construction.

$\bullet$ For $b=1$, the matrix $M^{(1)}$ has only one row, which is identical to the sequence $\profile(s,m,0)$.
We remark that $M^{(1)}$ does indeed have $n_1=m\cdot T_1= 2ms$
columns, as required by \cref{lem:profile:zero}.

$\bullet$ Next consider the case $b \geq 2$.
The basic idea is that $M^{(b)}$ should simply be the concatenation on $2ms+1$ copies of $M^{(b-1)}$, 
plus an additional row identical to the sequence $\profile(s,m,T_{b-1})$,
which will be used to distinguish between the different copies.

However, this basic idea does not scale correctly,
so instead of concatenating identical copies of $M^{(b-1)}$, we
also concatenate truncated copies of $M^{(b-1)}$.
Specifically, we allow for the truncated matrix $M^{(b-1, \star)}$ which is identical to $M^{(b-1)}$
except that its first column is removed.

We now set $M^{(b)}= M^{(b-1, 0)}\circ M^{(b-1, 2)}\circ  \cdots \circ M^{(b-1, 2 m s)} \circ M^{(b-1, 2 m s-1)}\circ\cdots \circ M^{(b-1, 1)})$, where
each $M^{(b-1, \ell)}$ is either $M^{(b-1)}$ or $M^{(b-1,\star)}$ and where $\circ$ denotes the concatenation operation.
Thus, it remains, to prescribe, for each 
for $\ell \in [0, 2 m s]$ whether 
$M^{(b-1, \ell)}$ is set equal to $M^{(b-1)}$ or $M^{(b-1,\star)}$.
To do this, we use the $m$-dual of the $\profile(s,m,T_{b-1})$.
In particular, define $I_{\ell}$ to be the set of indices
where the $m$-dual of the profile takes value $\ell$.
That is,
\begin{equation}
\label{eq:def:Iell}
    I_{\ell} = \{i \mid \dual_m(\profile(s,m,T_{b-1}))_i = \ell\}
\end{equation}
Recall that $n_b= mT_b = m\cdot ((2ms+1)\cdot T_{b-1}-2)$.
Observe then, from \cref{lem:profile:T}, that
the $(I_{\ell})$s are disjoint integer intervals
of length either $n_{b-1}$ or $n_{b-1} - 1$
whose union is $[0, n_b - 1]$. 
For each $\ell \in [0, 2 m s]$, we now define
\[
    M^{(b-1, \ell)} =
    \begin{cases}
    M^{(b-1)} & \text{if } |I_{\ell}| = n_{b-1},\\
    M^{(b-1, \star)} & \text{if } |I_{\ell}| = n_{b-1} - 1.
    \end{cases}
\]
The resultant construction of $M^{(b)}$ is then illustrated 
in \cref{fig:M:b}.
For example, for $s=1$, $m=2$ and $b=2$, we have
$T_0 = 0$ and $T_1 = 2$, so
\begin{align*}
    \profile(s,m,T_{b-1}) &=
    \profile(1,2,2) =
    (0, 0, 0, 0, 0, 2, 0, 2, 2, 2, 2, 1, 2, 1, 0, 1),
    \\
    \dual_m(\profile(s,m,T_{b-1})) &=
    (0, 0, 0, 0, 2, 2, 2, 4, 4, 4, 3, 3, 3, 1, 1, 1),
    \\
    \profile(s,m,T_{b-2}) &=
    \profile(1, 2, 0) =
    (0,0,2,1),
\end{align*}
so
\[
    M^{(2)} =
    \left(
    \begin{array}{cccccccccccccccc}
    0 & 0 & 2 & 1 & 0 & 2 & 1 & 0 & 2 & 1 & 0 & 2 & 1 & 0 & 2 & 1\\
    0 & 0 & 0 & 0 & 0 & 2 & 0 & 2 & 2 & 2 & 2 & 1 & 2 & 1 & 0 & 1
    \end{array}
    \right).
\]

\begin{figure}[h]
\[
\def\cellwidth{1.72cm}
\def\spacea{\hphantom{aaaaaaaa}}
\vspace{-0.1cm}
    M^{(b)} =
\begin{array}{c}
    \begin{array}{C{\cellwidth}C{\cellwidth}C{\cellwidth}C{\cellwidth}C{\cellwidth}C{\cellwidth}}
        $\overbrace{\spacea}^{I_0}$ &
        $\overbrace{\spacea}^{I_2}$ &
        &
        $\overbrace{\spacea}^{I_{2 m s}}$ &
        &
        $\overbrace{\spacea}^{I_1}$
    \end{array}
    \\
    \begin{array}{|C{\cellwidth}|C{\cellwidth}|C{\cellwidth}|C{\cellwidth}|C{\cellwidth}|C{\cellwidth}|}
        \hline
        &&&&& \\
        $M^{(b-1,0)}$ &
        $M^{(b-1,2)}$ &
        $\cdots$ &
        $M^{(b-1, 2 m s)}$ &
        $\cdots$ &
        $M^{(b-1, 1)}$
        \\
        &&&&& \\
        \hline
        \multicolumn{6}{|c|}{\profile(s,m,T_{b-1})} \\
        \hline
    \end{array}
\end{array}
\]
\caption{The recursive construction of the matrix $M^{(b)}$.
The indices $j$ for the sets $I_j$ on top of the figure
are first increasing even numbers, then decreasing odd numbers.}
\label{fig:M:b}
\end{figure}

As stated, we will take our vectors $(\vz_0, \vz_1, \dots, \vz_{n_b-1})$
to be the columns of $M^{(b)}$.
That is, let $\vv_{i,j} = M^{(b)}_{i \bmod n_b,\, j}$.
Then, for any $i \in \integers$, we have $\vz_i = (\vv_{i,j})_{j \in [0,b-1]}$.

\begin{lemma}
\label{th:profile:to:vector:colouring}
The sequence of vectors $(\vz_0, \vz_1, \dots, \vz_{n_b-1})$
is the pattern of an $(n_b,m,b,2s)$-vector sum packing.
\end{lemma}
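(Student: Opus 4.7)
The plan is to proceed by induction on $b$, exploiting the recursive structure of $M^{(b)}$ together with the $m$-dual properties from \cref{lem:profile:zero,lem:profile:T}. Periodicity of $(\vz_j)$ is immediate from the cyclic indexing, so the task reduces to proving injectivity of the window sums. For the base case $b=1$, the matrix $M^{(1)}$ is exactly $\profile(s,m,0)$, whose $m$-dual is a permutation of $[0, 2ms-1]$ by \cref{lem:profile:zero}; hence all $m$-window sums are distinct.

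I would simultaneously prove by induction a short auxiliary claim: \emph{the first column of $M^{(b)}$ is the zero vector}. The base $b=1$ is immediate since $\profile(s,m,0)$ begins with $0$. For the inductive step, \cref{lem:profile:T} gives $|I_0| = m T_{b-1} = n_{b-1}$, so $M^{(b-1,0)} = M^{(b-1)}$; combined with the inductive hypothesis on the first column of $M^{(b-1)}$ and the fact that $\profile(s,m,T_{b-1})$ also begins with $0$, this yields the claim.

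For the inductive step of the main claim, let $F(j)$ denote the window sum at position $j$ and suppose $F(x) = F(y)$. The last coordinate of $F(j)$ equals $\dual_m(\profile(s,m,T_{b-1}))_j$, so equality there forces $x, y \in I_\ell$ for a common $\ell$. Let $V^{(b-1)}$ denote the concatenation $M^{(b-1,0)} \circ M^{(b-1,2)} \circ \cdots \circ M^{(b-1,1)}$ forming the top $b-1$ rows, and define $\sigma : [0, n_b-1] \to [0, n_{b-1}-1]$ sending each position of $V^{(b-1)}$ to the corresponding column of $M^{(b-1)}$. Within any $I_\ell$, $\sigma$ increments by one; at a boundary into a truncated interval (when $|I_{\ell'}| = n_{b-1}-1$), $\sigma$ jumps from $n_{b-1}-1$ directly to $1$, skipping column $0$. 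By the auxiliary claim that column $0$ of $M^{(b-1)}$ is zero, this skip does not affect the sum, so the window sum at position $j$ in $V^{(b-1)}$ coincides with the cyclic window sum of $M^{(b-1)}$ at position $\sigma(j)$. Here we use $n_{b-1} \geq 2m$ (from $T_{b-1} \geq 2s \geq 2$), ensuring that each window crosses at most one interval boundary. The inductive hypothesis applied to $M^{(b-1)}$ then forces $\sigma(x) = \sigma(y)$, and since $\sigma$ restricted to $I_\ell$ is an affine injection, we conclude $x = y$.

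The main obstacle is the handling of boundary-crossing windows, where $\sigma$ is not a pure shift. The auxiliary invariant that the first column of $M^{(b)}$ is zero is precisely what makes the recursive ``skip-first-column'' construction leave window sums unchanged, allowing the clean reduction to the inductive hypothesis on $M^{(b-1)}$.
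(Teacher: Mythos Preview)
Your overall strategy matches the paper's: induct on $b$, use the last coordinate of the window sum to pin both positions into a common interval $I_\ell$, then invoke the induction hypothesis on $M^{(b-1)}$. The paper phrases the reduction via ``$m$-dual commutes with concatenation'' (each profile piece starts with $m-1$ zeros), while you phrase it via the position map $\sigma$; these are equivalent viewpoints.

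There is, however, a genuine gap in your boundary-crossing step. When the window straddles a boundary into a truncated interval, your $\sigma$-indices are
\[
n_{b-1}-1-k,\ \ldots,\ n_{b-1}-1,\ 1,\ 2,\ \ldots,\ m-1-k,
\]
whereas the cyclic window in $M^{(b-1)}$ starting at $\sigma(j)=n_{b-1}-1-k$ covers
\[
n_{b-1}-1-k,\ \ldots,\ n_{b-1}-1,\ 0,\ 1,\ \ldots,\ m-2-k.
\]
The two index sets differ not only by the skipped column $0$ but also by the \emph{extra} column $m-1-k$ at the other end. Hence the discrepancy between the two sums is $M^{(b-1)}_{m-1-k} - M^{(b-1)}_0$, and knowing only that column $0$ is zero does not make this vanish for $k<m-1$.

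The fix is to strengthen your auxiliary invariant from ``the first column of $M^{(b)}$ is zero'' to ``the first $m-1$ columns of $M^{(b)}$ are zero'' (in fact the first $m$ are). This is exactly the fact the paper uses, stated there as ``both $\profile(s,m,T_j)$ and $\profile^*(s,m,T_j)$ start with $m-1$ occurrences of $0$.'' Your inductive proof of the auxiliary claim extends verbatim: $\profile(s,m,T_{b-1})$ begins with $mT_{b-1}\ge m$ zeros, and since $M^{(b-1,0)}=M^{(b-1)}$ the first $m-1$ columns of the top block are zero by the induction hypothesis. With this stronger invariant in hand, both $M^{(b-1)}_0$ and $M^{(b-1)}_{m-1-k}$ vanish for every $k\in[0,m-2]$, and your $\sigma$-argument goes through as written.
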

\begin{proof}
Given $(\vz_0, \vz_1, \dots, \vz_{n_b-1})$
recall that $f(\vz_j, \vz_{j+1},\dots, \vz_{j+m-1}) = \vz_j+ \vz_{j+1}+\cdots +\vz_{j+m-1}$, 
where the indices are taken modulo $n_b$. Our task is to prove that $f$ is injective.

We will show this by extending the definition of an $m$-dual to matrices: 
we define the $m$-dual of a matrix $M$ 
with $n$ columns to be the matrix $D$ of the same dimensions
where for all $i$, the $i$th column of $D$
is equal to the sum of the columns of $M$
of all indices in $[i, i+m-1]$ modulo $n$.

Proving $(\vz_0, \vz_1, \dots, \vz_{n_b-1})$
is an $(n_b,m,b,2s)$-vector sum packing 
is then equivalent to proving that the $m$-dual of $M^{(b)}$
does not contain any identical columns.
We will do so by induction on $b$.

\proofsubparagraph*{Initialization.}
For the base case $b=1$, recall that $M^{(b)} = \profile(s,m,0)$.
Hence, by \cref{lem:profile:zero},
$\dual_m(M^{(b)})$ does not contain two identical columns.

\proofsubparagraph*{Induction Step.}
For the induction hypothesis, assume that $\dual_m(M^{(b-1)})$
does not contain two identical columns.
Now define $\profile^*(s,m,T)$ to be equal to $\profile(s,m,T)$
except with the first $0$ removed.
By construction, each row $j \in [0, b-1]$
of $M^{(b)}$ is a concatenation of the form
$P_0\circ P_1 \circ \cdots \circ P_r$
where each $P_i$ is equal either to
$\profile(s,m,T_j)$ or to $\profile^*(s,m,T_j)$.
Next observe that both $\profile(s,m,T_j)$ and $\profile^*(s,m,T_j)$
start with $m-1$ occurrences of $0$.
This implies that $m$-dual and concatenation commutes as
\[
    \dual_m(P_0\circ P_1 \circ \cdots \circ P_r) =
    \dual_m(P_0)\circ\dual_m(P_1)\circ \cdots \circ\dual_m(P_r).
\]
Therefore the $m$-dual of the matrix $M^{(b)}$ is
\begin{equation}
\label{eq:dual:M}
\def\cellwidth{2.9cm}
\def\spacea{\hphantom{aaaaaaaaaaaaaaa}}
    \dual_m(M^{(b)}) =
    \begin{array}{c}
        \begin{array}{C{\cellwidth}C{\cellwidth}C{\cellwidth}}
            $\overbrace{\spacea}^{I_0}$ &
            &
            $\overbrace{\spacea}^{I_1}$
        \end{array}
        \\
        \begin{array}{|C{\cellwidth}|C{\cellwidth}|C{\cellwidth}|}
            \hline
            && \\
            $\dual_m(M^{(b-1,0)})$ &
            $\cdots$ &
            $\dual_m(M^{(b-1, 1)})$
            \\
            && \\
            \hline
            \multicolumn{3}{|c|}{\dual_m(\profile(s,m,T_{b-1}))} \\
            \hline
        \end{array}
    \end{array}
\end{equation}

Assume that two columns $c$ and $c'$ with column indices $i$ and $i'$
in $\dual_m(M^{(b)})$ are equal.
Let $\ell \in [0, 2 m s]$ denote the value of $c$ in its final row $b-1$. 
Thus, $c'$ also has entry $\ell$ in its last row.
But the last row $b-1$ of $\dual_m(M^{(b)})$
is defined to equal $\dual_m(\profile(s,m,T_{b-1}))$.
Hence, by definition of $I_\ell$ from \cref{eq:def:Iell},
we have $i \in I_{\ell}$ and $i' \in I_{\ell}$.
Let $d$ and $d'$ be obtained from $c$ and $c'$
by removing their last row $b-1$.
By \cref{eq:dual:M},
both $d$ and $d'$ belong to $\dual_m(M^{(b-1,\ell)})$.
Thus, they both belong to $\dual_m(M^{(b-1)})$. 
By the induction hypothesis,
this implies $i = i'$ and concludes the proof.
\end{proof}

Let the $(n,m,b,s)$-decoding function for vector sum packing
be defined as the function that
inputs an integer vector $\vx$
and outputs its index
in the $m$-dual of the $(n,m,b,s)$-vector sum packing we defined.

\begin{corollary}
\label{th:decoding:vector:sum:packing}
For fixed $b$, the $(n_b,m,b,2s)$-decoding function
for our vector sum packing
is computable in a constant number of arithmetic operations.
\end{corollary}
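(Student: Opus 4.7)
The plan is to proceed by induction on the parameter $b$, mirroring the recursive construction of $M^{(b)}$. For the base case $b=1$, the matrix $M^{(1)}$ is the single row $\profile(s,m,0)$, so the desired decoding function coincides with the $(s,m,0)$-decoding function; \cref{th:decode:profile:zero} gives this by an explicit closed-form formula, which is $O(1)$ in arithmetic operations.

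For the inductive step, I will assume the $(n_{b-1},m,b-1,2s)$-decoder runs in $O(1)$ and build the $(n_b,m,b,2s)$-decoder on top of it by exploiting the block structure displayed in \cref{eq:dual:M}. The key observation is that within each block $I_\ell$ the last row of $\dual_m(M^{(b)})$ is constantly equal to $\ell$, while the top $b-1$ rows restricted to $I_\ell$ form $\dual_m(M^{(b-1,\ell)})$. So, given an input column $\vx=(v_0,\ldots,v_{b-1})$, I will first set $\ell = v_{b-1}$ and call the $(s,m,T_{b-1})$-decoder of \cref{th:decode:profile} on $\ell$ to obtain, in $O(1)$ operations, the starting index $s_\ell$ of the interval $I_\ell$. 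Then I will call the inductive decoder on $(v_0,\ldots,v_{b-2})$ to obtain the offset inside the block, and return $s_\ell$ plus this offset.

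The one step that is not immediate is the call to the inductive decoder in the case $M^{(b-1,\ell)}=M^{(b-1,\star)}$, since formally this is the $m$-dual of a truncated matrix rather than of $M^{(b-1)}$ itself. I expect this to be the main (mild) obstacle. The resolution I plan to use is the same leading-zero argument that underpins the commutativity of $\dual_m$ with concatenation in the proof of \cref{th:profile:to:vector:colouring}: since each row of $M^{(b-1)}$ begins with $m-1$ zeros, a short case analysis of wrapped and unwrapped cyclic $m$-sums shows that removing its first column merely shifts every column of the $m$-dual by one index. Consequently the inductive decoder still applies verbatim after a constant-time subtraction of $1$. Once this is in hand, the overall algorithm consists of $b$ nested recursive calls each contributing $O(1)$ arithmetic operations, so for fixed $b$ the total cost is $O(1)$ as required.
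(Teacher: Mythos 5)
Your proposal is correct and follows essentially the same route as the paper: a depth-$b$ recursion over the block structure of $\dual_m(M^{(b)})$ in \cref{eq:dual:M}, using the profile decoders of \cref{th:decode:profile:zero,th:decode:profile} on the last coordinate to locate the block $I_\ell$ and then recursing on the remaining coordinates. The only (cosmetic) difference is bookkeeping for the truncated blocks: the paper threads a Boolean flag through the recursion and subtracts $1$ from the profile decoder's output at the next level when the current block is $M^{(b-1,\star)}$, whereas you apply the equivalent shift of $1$ to the recursive call's result after justifying, via the same leading-zeros argument, that truncation shifts the $m$-dual by one index.
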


\begin{proof}
This decoding function is computed recursively,
following the recursive construction of matrix $\dual_m(M^{(b)})$
from \cref{eq:dual:M}.
It inputs a vector $\vx$ and an auxiliary Boolean parameter $B$,
initialized at \emph{False},
that indicates if we are looking for localization
in a vector sum packing where the first vector has been removed.

We first use the decoding function for profiles
from \cref{th:decode:profile:zero,th:decode:profile}
on the last coordinate of $\vx$
to compute an integer $c$,
and set $p = c$ if $B$ is equal to \emph{False},
and $p = c-1$ if $B$ is equal to \emph{True}.
For $b = 1$, as $\vx$ is a vector of dimension $1$,
the algorithm stops and $p$ is returned.
Otherwise, by construction,
$p$ is the smallest possible index
for any vector whose last coordinate is equal
to the last coordinate of $\vx$.
Now, in \cref{eq:dual:M},
we want to determine whether the matrix
$\dual_m(M^{b-1,j})$ on top of $p$
is equal to $\dual_m(M^{b-1})$
or to $\dual_m(M^{b-1,\star})$.
As explained in the construction,
this is decided by looking at $\dual_m(\profile(s,m,T_{b-1}))$
from \cref{lem:profile:T}.
In this sequence, let $\ell$ denote
the number of repetition of the element at position $p$.
\begin{itemize}
\item
If $\ell = m T_{b-1}$,
then we are working with $\dual_m(M^{b-1})$.
In that case,
we call recursively the $(n_{b-1},m,b-1,2s)$-decoding function
on the vector $\vx$ without its last coordinate,
with auxiliary parameter equal to \emph{False}.
The output is added to $p$ and returned.
\item
Otherwise, we have $\ell = m T_{b-1} - 1$,
and we are working with $\dual_m(M^{b-1,\star})$.
We call recursively the $(n_{b-1},m,b-1,2s)$-decoding function
on the vector $\vx$ without its last coordinate,
with auxiliary parameter equal to \emph{True}.
The output is added to $p$ and returned.
\end{itemize}
Deciding whether $\ell = m T_{b-1}$
or $\ell = m T_{b-1} - 1$
is achieved by looking at the parity of $p$
and its value modulo $s$.
This recursive construction has depth $b$,
so for $b$ fixed, it requires only
a constant number of arithmetic operations.
Python code for this algorithm is provided in \cite{ourcode}.
\end{proof}

\subsection{Summary and Complexity}

Let us summarize our algorithm constructing a grid coloring.
It inputs a dimension $d \geq 2$
and two other parameters $b \geq 1$ and $t \geq 1$.
The first step is to compute the window size $m = 2bdt$,
the number of colors $k = bd + 1$,
the parameter $s = (2 b d)^{d-2} t^{d-1}$
to ensure $2 s = \frac{m^{d-1}}{b d}$,
and the grid size $n = (2ms + 1)^{b-1} (2ms - 1/s) + 1/s$.
The second step is to construct an $(n,m,b,2s)$-vector sum packing,
as described in \cref{sec:generating:vector:sum:packing},
using the \emph{profile} sequences
defined in \cref{sec:profiles}.
In the third step, we finally colour the points
of our grid of side $n$ and dimension $d$.
Our set of colours is $\{0, 1, \ldots, k-1\}$. Here the 
last colour $k-1$ is a ``\emph{blank}'' color
used to erase the other colours.
The other colours are divided into $d$ sets,
called pigment classes.
Each pigment class contains $b$ colours,
which we call its \emph{shades}.
To each dimension of the grid is associated a unique pigment.
The vector sum packing is then used,
as explained in \cref{sec:proofs:vector:to:grid},
to colour the points of the grid.

Next consider the localization problem.
Localization means, given a multiset $S$ of colours,
the recovery of the unique window of size $m$ in the grid
that contains this multiset of colors
(if it exists).
To achieve localization, we proceed dimension by dimension.
We count in $S$ the colours from the pigment class
corresponding to the dimension considered
and make a vector out of it.
For example, if the dimension corresponds to the colours ${4, 5, 6}$
and $S$ contains three occurrences of the color $4$, zero occurrences of the colour $5$
and two occurrences of the colour $6$,
the vector is $(3,0,2)$.
We use the decoding algorithm for vector sum packing
from \cref{th:decoding:vector:sum:packing}
to translate this vector into a coordinate.
Having achieved this for every coordinate,
we deduce the position of the window
whose multiset of colours is equal to $S$.

The construction of the grid and localization procedure
are illustrated in \cref{sec:example}.
We measure complexity as the number of arithmetic operations
for $b$ and $d$ fixed, while $t$ goes to infinity.
The construction of the grid has complexity
proportional to the size of its output,
which is $\bigO(n^d) = \bigO(t^{b\, d^2})$.
\cref{th:decoding:vector:sum:packing}
implies that the complexity of the localization algorithm is constant.

\section{Conclusion}

Many interesting directions for future research remain, both
theoretical and practical. One nice extension would be
to make our grid colouring robust
by allowing for error detection and correction.
This has been achieved for other cycle and torus packing problems
(see, for example, \cite{bruckstein_simple_2012, berkowitz_robust_2016, makarov_construction_2019,  chee_binary_2019})
and is a necessary step for practical applications.
Another valuable contribution
would be to study disc-like windows rather than
the square windows examined in this article. This would match
the natural shape of the domain
where an emission emitted at a point is detectable.


\section*{Acknowledgements}
Part of the work was been carried out at the Laboratory for Information, Networking and Communication Sciences (www.lincs.fr). 
We thank Dr. Paolo Baracca, Siu-Wai Ho, Kenneth Shum and many colleagues for their great support and discussions.

\bibliography{biblio}

\appendix

This appendix provides additional details
on the practical applications
of our mathematical construction
and on our Python implementation,
including a grid colouring of size $256$,
window size $8$ and number of colours $5$.

\section{Engineering Implementation Details}

This appendix presents some implementation details,
providing further evidence of
the applicability of universal torus packing for multisets
for positioning (see also \cref{sec:eng}).

\subsection{Light-Based Positioning} 

Over recent years, visible light based indoor localization systems \cite{VLP17} have gained considerable attention for two key reasons: (i) strong demand for highly accurate but low-cost solutions and (ii) the global emergence of light-emitting diode (LED) based lighting in replacing fluorescent or traditional lamps. LED lighting is much more energy efficient and has longer life expectancy. LED lighting is by default for illumination but we can reuse it for localization (dual use) or even for data transmission. Given the proliferation of LED lighting and also the emerging development of visible light communications \cite{Haas15}, visible light based localization/positioning is ready for commercial deployment and system development \cite{Till21}.

In comparison to existing schemes \cite{VLP20}, which usually require either multiple light sources similar to the requirement of Global Positioning System (GPS) or multiple optical receivers/photodiodes (PDs) for triangulation or trilateration methods, 
a novel light based localization method can use only one LED and one light sensor (at the user device) \cite{OGC23} 
with a suitable grid colour 
mapping 
mechanism 
or conversion,  
which allows low system cost and high positioning accuracy, capable of cm-resolution or even higher \cite{Nature23}. Note that since common PD has response time at about 25 ms or even faster, one can expect low latency in the localization, which is particularly relevant for unmanned ground vehicle application and mobile users.

In the following, we explain its design and mechanism for the completeness of information. There is one LED. Without loss of generality, we assume the LED is a square light emitting plate with width equal to $W$ millimeters (mm). The user device has a PD which is a point source and equipped with a grid colour-coded film. The film is printed with a set of coloured dots, which is to filter light shined on it. Note that the projection of the above LED on the film when being captured by the PD is also a square (we call it the \emph{projected area} or \emph{window}) and will reach the light sensitive area of the PD accordingly. For example, the sensitive area of a conventional PD can be 0.6 mm $\times$ 0.6 mm \cite{Sensor}, which is a tiny point.

\begin{figure}[h!] 
    \begin{center}
    \includegraphics[width=0.6\columnwidth]{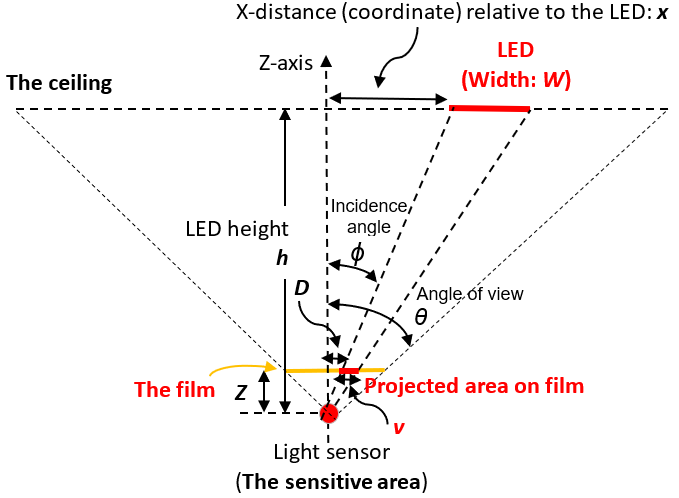} 
    \caption{The system schematic and localization method (Side View) \cite{OGC23}. The width of the projected area is denoted by $v$. Note that the picture is not to scale.} 
    \label{fig:Location}
    \end{center}
\end{figure} 

As shown in \cref{fig:Location}, when the LED light shines on the film, a projected area (window) on the film is produced. Note that the projected area would appear at different location on the film, depending on the position of the PD (\ie, the location of the user device) relative to that of the LED or the light incidence angle form the LED.
\cref{fig:Location} explains the geometry and also how to recover the relative coordinate of the user compared to the position of the LED \cite{OGC23}.
Assume that the film is parallel to the ceiling where the LED is installed. 
We denote the width of the projected area by $v$. 
By similar triangles, we have $v/W = Z/h$,
where $h$ and $Z$ are the heights, such that $v = (W \times Z)/h$. For example, given a LED with $W = 45$ mm, $h = 2$ meters and $Z = 10$ mm due to the film, it is easy to determine that $v = 0.225$ mm.

For user localization, we would colour the dots on the film such that for each projected area on the film, the PD through the film would receive distinct light colour intensity multisets, given by the totality of the light arriving at the PD through the dots of the projected area. Note that the PD does not need to detect the exact pattern of the coloured dots or measure the light of individual dot: the requirement in colouring the dots is that the sum of the received lights through the dots in each projected area should be distinct.\footnote{To detect the exact pattern of the coloured dots or the light intensity through each dot, the PD cannot be a simple or common PD but possibly has to be an array of PDs packed or co-located in a particular form and with a tailor-made sophisticated measurement or decoding algorithm for the detection. This would be bulky and expensive.} The printing on the film should ensure that the multiset of the colours of the dots in each projected area is unique. This motivates our construction of universal torus packing for multisets.

\paragraph*{Feasibility and Performance Analysis.}
Let’s denote the total number of colours that we would use for colouring all the dots by $K$. For example, one can consider $K = 4$ by a laser printer and $K = 6$ to $12$ by a inkjet printer.
Note that the typical dots per inch (DPI) for a conventional laser printer is from 600 to 2400, while the typical DPI for a inkjet printer is from 300 to 720.
The distance between two neighboring dots on the film, denoted by $d$ in \cref{fig:Location}, would depend on the DPI value. We denote the number of dots inside the projected area on the film by $N$. We have $N = (v/d)^2$. Since $v = (W \times Z) / h$, 
\begin{equation}
    \label{eq:N}
    N = (v/d)^2 = \left(\frac{(W \times Z) / h}{d}\right)^2 = \left(\frac{W \times Z}{h\times d}\right)^2.
\end{equation}
To justify the above scheme, we need to evaluate how many different multisets can be produced by the projected area, which has $N$ dots, and thus determine whether it is sufficient for identifying each window.

The number of different multisets that can be produced by the projected area is expressible as $C(K,N) = \binom{K+N-1}{N}$.
We consider DPI $= 720$ by printing, \ie, there are $280$ dots/cm, we would have $d = 1/280$  cm $= 0.0357$ mm. Thus, $N = (v/d)^2 = (0.225/0.0357)^2 \approx 40$. Consider $K = 6$, We have $C(6, 40) \approx 1.22 \times 10^6$. Note that today’s conventional light sensor can be equipped with 16-bit ADC \cite{Sensor}, which can distinguish a total of $2.81 \times 10^{14}$ different colours. We can therefore expect that it can  distinguish $C(6, 40)$ different colour multisets since $2.81 \times 10^{14}$  is much larger than $1.22 \times 10^6$.

Consider the angle of view (AoV) of the PD is $60$ degrees \cite{Sensor}. 
Then, the length of the film, denoted by $L$, is to be equal to $2\times(Z\tan(\theta)) = 2\times(10 \tan(60)) \approx 34.6$ mm, by the geometry. The number of possible positions of the projected area on the film is given by $(L/d)^2$, \ie,  $(34.6/0.0357)^2 \approx 9.39 \times 10^5$, which is less than $C(K, N) \approx 1.22 \times 10^6$. Therefore, there is no information theoretic barrier to colouring the dots on the film such that each projected area has different colour multiset for different received light incidence angles. In practice, our algorithm requires a constant factor of elbow room in this calculation, for which we would increase $C(K, N)$ by using more colours or a higher DPI value.

Regarding the possible localization accuracy, we can use the following relationship to estimate by  similar triangles according to \cref{fig:Location}: 
$D / x = Z / h$, so $x =  (h/Z) \times D$. 
Thus we estimate the maximum error as $\Delta x =  (h/Z) \times (\Delta D)$, where $\Delta  D$ is the limit that the PD can detect whether the projected area has changed. $\Delta  D$ is upper bounded by $d$ in our system. Given $h = 2$ meters, $Z = 10$ mm, $d \approx 0.0357$ mm, we have $\Delta x \approx 7.14$ mm. That is, estimation error to the distance $x$ can be less than $1$ cm.

The interesting question of error correction for light based positioning
is arguably yet to be addressed in a manner consistent 
with the physical parameters described above: it is hard to reconcile these parameters
with a camera-based de Bruijn tori model,
given the extremely high precision optics that would be required.
Furthermore, it does not seem appropriate to model errors in the signal
as if it were a bit stream received over a communication channel:
experimental testing may be needed to provide 
a physically realistic model for the errors that need to be corrected.

\subsection{Ambient IoT} 

Positioning has been a fundamental topic
in cellular networks since their inception, {\em e.g.}, from the mid-nineties in meeting requirements for emergency calls~\cite{ericson_blog},
to 5G~\cite{keating2019overview},
and in future 6G networks~\cite{wild2021joint}.

Ambient IoT (\emph{Internet of Things}) refers to
a wireless sensor network
connecting a large number of objects
using low-cost self-powered sensor nodes
(\eg, ambient backscatters, mentioned in \cref{sec:eng}).
We consider a setting composed of
\begin{itemize}
\item
a factory hall of dimension $250 \times 250 m$,
where we deploy $1$ tag every $1 m^2$,
thus resulting in $62,500$ tags,
\item
a moving robot, that acts as an actuator,
\ie, it sends a low power activation signal,
\item
a receptor, that could also be the moving robot itself,
who receives back the ID from the tags nearby via backscattering,
\item
the reader that determines the position of the reference device
based on the tag IDs.
\end{itemize}

A basic solution for the Ambient IoT deployment
would just assume that each tag has its own ID,
which is different from any other deployed tag.
However, if the number of tags is large,
the number of bits used to represent the tag ID is also large,
thus requiring higher transmit power
at the moving robot to enable backscattering at the ambient IoT tags.
The problem that we try to solve is the following:
Can we optimize the ambient IoT tag deployment
such that the power used by the activator
to transmit the activation signal is minimized \cite{Nokia23}?
This motivates the construction presented in this article
of a universal torus packing for multisets.

\begin{figure}
\begin{center}
\resizebox{\textwidth}{!}{\includegraphics{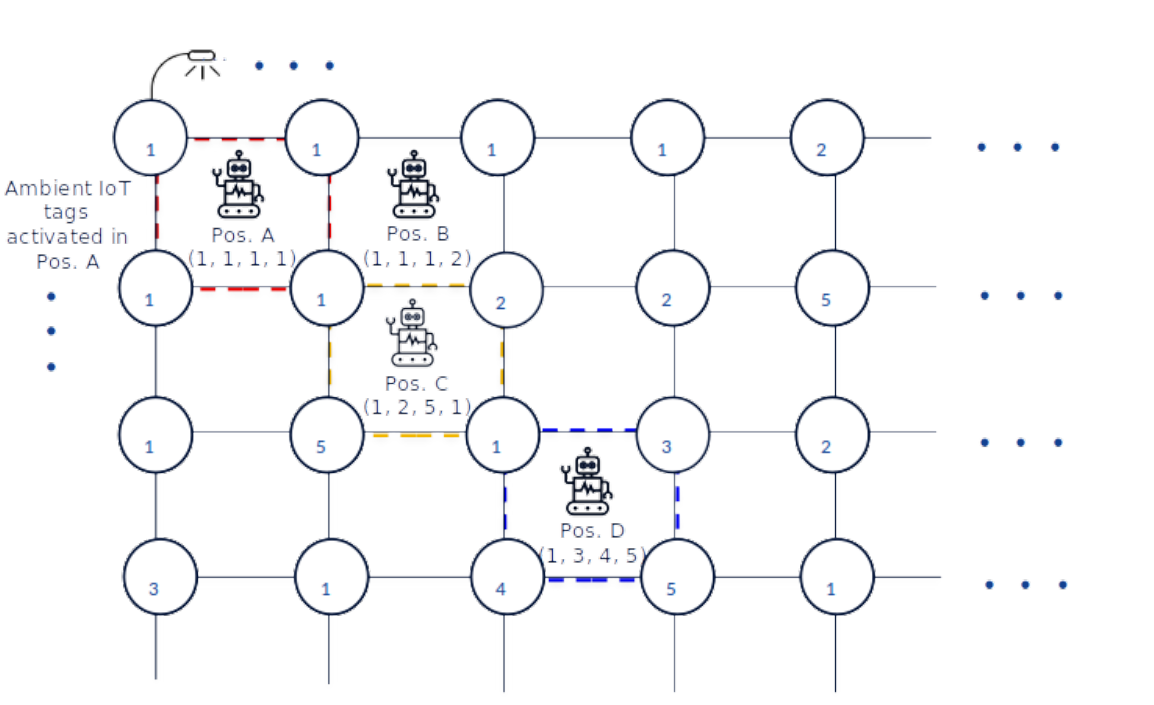}}
\caption{Robot moving in a factory hall, activating tags that send their ID via backscattering \cite{Nokia23}.}
\label{fig:factory:hall}
\end{center}
\end{figure}

For example, consider the factory hall from above.
With $62,500$ tags, we need at least
an average of $15.9$ bits per tag
(as $15.9 < \log_2(62,500)$) to assign different IDs to all the tags.
Using 
the coding described in the proof of \cref{th:construction:grid:colouring},
with parameters $d=2$, $b=2$, $s=1$ for \cref{th:construction:vector sum packing},
we can construct a grid of dimension $2$
and size $256$, with a window of size $8$ and using $5$ colours.
The average number of bits per tag is $\log_2(5) \approx 2.321$,
significantly reducing the power necessary by the activator
to get the ID of the tags via backscattering.

\section{Software Implementation}

We implemented our algorithm constructing the grid colouring
as well as the decoding algorithm in Python.
The code is available at \cite{ourcode}.
An example of grid colouring computed with this code
is presented in \cref{fig:large:grid}.

\begin{figure}[h!]
\begin{center}
\includegraphics[scale=0.3]{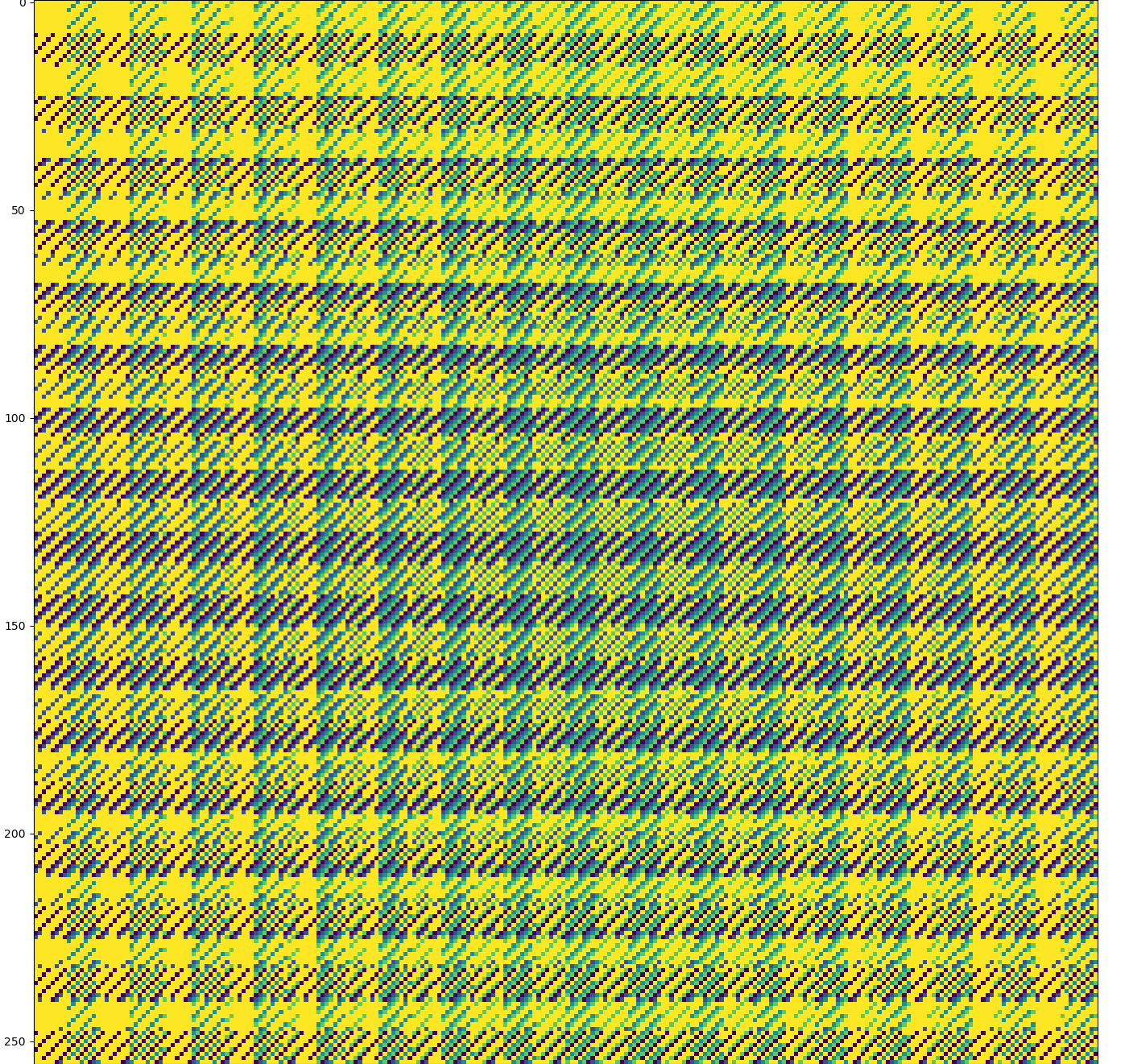}
\caption{Grid colouring
of size $256$, window size $8$ and number of colours $5$.
It corresponds to the parameters $d=2$, $b=2$ and $t=1$.}
\label{fig:large:grid}
\end{center}
\end{figure}

\end{document}